\documentclass{amsart}
\usepackage{amsaddr}

% \journal{Computer Methods in Applied Mechanics and Engineering}

\usepackage{amsmath,amssymb}
\usepackage{amsthm,color,verbatim}
\usepackage{booktabs}

%% use hyperfer
\usepackage{color}
\definecolor{newblue}{RGB}{94,89,144}
\definecolor{newblue2}{cmyk}{1,0.6,0,0.06}
\definecolor{grey}{gray}{0.5}
\usepackage[pdfborder={0 0 0},
  colorlinks=true,
  citecolor=newblue,
  linkcolor=newblue2,
  urlcolor=newblue]{hyperref}

\usepackage{natbib}
\usepackage{graphicx}

\newtheorem{lemma}{Lemma}

\theoremstyle{remark}
\newtheorem{remark}{Remark}

\newcommand{\C}{\mathcal{C}}

\renewcommand{\P}{\mathbb{P}}
\newcommand{\R}{\mathbb{R}}
\newcommand{\W}{\mathcal{W}}
\renewcommand{\Re}{\mathrm{Re}}

\newcommand{\abs}[1]{\left| #1 \right|}
\newcommand{\dd}{\, \mathrm{d}}
\renewcommand{\vec}[1]{\boldsymbol{#1}}

\newcommand{\unit}[1]{\,\mathrm{#1}}

\newcommand{\energyError}{\mathcal{E}}
\newcommand{\constraintError}{\mathcal{Z}}

\newcommand{\liquid}{\mathrm{liquid}}
\newcommand{\agar}{\mathrm{agar}}
\newcommand{\deltaT}{\Delta t}
\newcommand{\cuticle}{\mathrm{cuticle}}

\usepackage[pagewise,mathlines,displaymath]{lineno}
% \linenumbers

\begin{document}
% \include{notes}
% \shownotes	% comment this line out to hide notes

% \begin{frontmatter}

\title[A computational method for \textit{C.\ elegans} biomechanics]{A new computational method for a model of \textit{C.\ elegans} biomechanics: Insights into elasticity and locomotion performance}
\author{Netta Cohen and Thomas Ranner}
\date{\today}

\address{School of Computing, EC Stoner Building, University of Leeds, Leeds. LS2 9JT}
\email{N.Cohen@leeds.ac.uk, T.Ranner@leeds.ac.uk}

\begin{abstract}
  An organism's ability to move freely is a fundamental behaviour in the animal kingdom.
  To understand animal locomotion requires a characterisation of the material properties, as well as the biomechanics and physiology.
  We present a biomechanical model of \textit{C.\ elegans} locomotion together with a novel finite element method.
  We formulate our model as a nonlinear initial-boundary value problem which allows the study of the dynamics of arbitrary body shapes, undulation gaits
  and the link between the animal's material properties and its performance across a range of environments.
  Our model replicates behaviours across a wide range of environments. It makes strong predictions on the viable range of the worm's Young's modulus and suggests that animals can control speed via the known mechanism of gait modulation that is observed across different media.
\end{abstract}

\maketitle

%\begin{keyword}

%\end{keyword}

%\end{frontmatter}

\section{Introduction}
\label{sec:introduction}

\subsection{Background}

Active motility and animal locomotion are fundamental for the survival of life
forms across all scales. At the microscale, cells and microorganisms most
commonly occupy fluid habitats often near surfaces. To orient
themselves and navigate in their environments such microorganisms employ
active swimming.  In this low Reynolds number regime (Reynolds number $(\Re)$ less than one), where
viscous forces dominate over inertial forces, the scallop theorem \citep{Pur77}
prohibits self-propelled locomotion for time-reversal symmetric sequences of
body postures \citep{Tay52}. It is therefore no accident that at this scale,
locomotion most commonly relies on undulations, or the propagation of waves
along the body or parts thereof, that break time-reversal symmetry
\citep{CohBoy10}.

Of particular interest in the study of undulatory locomotion and its neural
basis is the roundworm (nematode) \textit{Caenorbabditis elegans}.  The animal's invariant anatomy (959 nongonadal cells in the adult hermaphrodite of which
exactly 302 are neurons), fully annotated cell lineage, genome and nervous
system and its well characterised behaviours make it a leading model organism in
neurobiology, developmental and molecular biology.  With small size, an 
elegant periodic gait, and experimental tractability, this animal has 
also captured the interest of physicists, computer scientists and engineers,
as a model system for active swimming, neuromechanics and biorobotics. 
While in the wild, \textit{C.\ elegans} grows mostly in rotten vegetation, 
it is grown extensively
in laboratories, where it is cultured on the surface of agar gels. With a
length of $1 \unit{mm}$ and up to $2 \unit{Hz}$ undulation frequency, the worm can have a $\Re \approx
1$ in water based buffer solutions - approaching the upper bounds for a low $\Re$
treatment \citep{lauga}. On agar, as well as in liquid, \textit{C.\ elegans} move by
propagating undulatory waves from head to tail to generate forward thrust
\citep{GraLis64,Cro70,PieCheMun08,BerBoyTas09,FanWyaXie10,Lebois2012}. Body wall muscles,
lining the dorsal and ventral sides of the body generate bending.

Close inspection of \textit{C.\ elegans} crawling on the surface of an agar gel reveals a thin film of liquid on the surface and surrounding the animals. As the animal moves, surface tension presses it strongly against the gel surface \citep{Wallace1968}. To move,
the animal carves a groove in the gel along its path, and in so doing, it leaves approximately periodic tracks in its wake. As the worm moves forwards, it overcomes the gel's yield stress with its head, allowing the rest of the body to follow more easily. In contrast, motion normal to the body surface is strongly resisted. For a sufficiently stiff gel, the net result is functionally quite similar to locomotion in a solid channel \citep{FanWyaXie10,BerBoyTas09}. In contrast, in a Newtonian environment, such as buffer solution, normal and tangential forces
are of comparable magnitude. The modulation of locomotory gait as a function 
of the properties of the fluid environment has been the focus of extensive
work \citep{BerBoyTas09,FanWyaXie10,SznPurKra10,Sauvage2011,Cohen2014}, highlighting the importance of biomechanical models, as well as neurological and fluid models, of this important system \citep{Niebur1991,BoyBerCoh12,SznPurKra10,FanWyaXie10,Cohen2014,Szigeti2014,Palyanov2016,lauga}.

\subsection{Model}

In this work we present a continuum model of \textit{C.\ elegans} biomechanics and derive a new numerical method that allows for simulations of arbitrary locomotion gaits.
%  focus on presenting a new numerical method for an existing model of undulatory locomotion applied to understand forwards locomotion in \textit{C. elegans}.
The model is based on previous formulations of bending
in elastic beams given by \citet[Sec.\ 17-20]{LanLif75} and \citet{GuoMad08}
which was originally applied to the locomotion of sandfish in granular media and has also been applied to the locomotion of \textit{C.\ elegans} in Newtonian media, for example, see \citep{SznPurKra10,FanWyaXie10}.
% \notes{the sentence is too dense and the citation very obscure. I have tried to begin to unpack but more work is needed. What is the contribution of LanLif beyond the definition of elasticity? WHat is the contirbution of GuoMad?} 
% \notes{How did Sznitman use this? Not clear to me...}
% \notes{TR: my understanding is that this model was first presented by \citet{GuoMad08} based on derivations of \citet[Sec. 17-20]{LanLif75}.
%   The same equations are used by both \citet{SznPurKra10} and \citet{FanWyaXie10} and dealt with in different ways.}
% \notes{%
%   The model is Eqns. (1,2) plus the following discussion in \citet{SznPurKra10} including boundary conditions.
%   A solution is found using a given Ansatz (First eqn. in Results and Discussion).
% }

In contrast to these previous approaches,
we will use an inextensibility constraint derived from conservation of mass to close the model and thus allow the exploration of kinematic properties, including body postures predicted by the model as well as crawling speeds.
Furthermore, we derive a new presentation of the model which yields a formulation better-suited to numerical computations to solve the full nonlinear system of equations.
% A further departure of the model from similar formulations \citep{FanWyaXie10} is the choice of coordinate frame, that will yield a smooth description of the body and a more robust set of equations that lend themselves to numerical solution.

\subsection{Numerical method}

We will solve the model using a novel parametric finite element method.
In this approach the midline curve of the worm is approximated by a piecewise linear curve and the local length constraint is enforced using a piecewise constant line tension in a so-called mixed finite element method.
A semi-implicit time stepping strategy is used which produces a linear system of equation to solve at each time step.
Our approach builds on the method of \citet{DziKuwSch02} for the elasticae equations which have a global length constraint instead of our local constraint.
A key step in the method is to introduce a new variable for curvature which permits the use of piecewise linear finite elements.
A semi-discrete error bound is provided by \citet{DecDzi09}.
For models without the local length constraint, moving the nodes purely according to the model may result in very distorted mesh points which may lead to break down of the method.
\citet{BarGarNur10,BarGarNur11} introduce an artificial tangential motion which avoids this type of break down.
A separate approach using a $C^1$-splines which capture the curvature directly in a conforming method for an inextensible elastic beam has been studied by \citet{Bar13}.
Here, the local length constraint ensures that the mesh points do not become too distorted.We propose a different formulation of the constraint equation which leads to a more accurate and robust numerical scheme whilst also working with the simpler finite element spaces.
A review of different finite element approaches to geometric partial differential equations is given by \citet{Deckelnick2005}.

\subsection{Outline}

We will demonstrate the applicability of the method in simulations to
investigate biomechanical properties of the worm, to investigate body shapes
during rhythmic forward locomotion and to illustrate the capacity of the active
body to generate non-undulatory body shapes, reminiscent of omega turns, which
the worm uses to reorient \cite{Gray2005}.
Furthermore we will investigate
limits of physical properties and movement envelope of the animal through different media which will show the evolutionary trade offs involved.

The rest of the manuscript proceeds as follows.  In Section~\ref{sec:model}, we
introduce the model of \citet{GuoMad08} with our new closing assumptions. Our
focus in this section is to provide sufficient detail to see where important
biological details are included.  This section includes details of a
non-dimensionalisation which allows us to consider important parameter
relationships.  Section~\ref{sec:numerical-methods} presents the numerical
method and also provides mathematical results to show the key properties
of the numerical scheme. We conclude, in
Section~\ref{sec:simulation-results}, with numerical results to show how
the model and numerical scheme can be used to simulate important parameter
regimes.

\section{Model}
\label{sec:model}

We present the model in detail in order to demonstrate where important biological details are included which could be controlled in appropriate experimental assays.
We aim to write the model in a way that is suitable for numerical computations so work with a parametrisation which is not necessarily an arc-length parametrisation.

In \textit{C.\ elegans} and other nematodes, the body is well approximated by a
tapered cylinder. A body cavity with high hydrostatic pressure ensures that the
volume of the worm is approximately conserved. A viscoelastic cuticle
encapsulates the animal. 95 body wall muscles line the body in four quadrants.
The muscles are driven by motor neurons in the head and along the body. Muscles
are attached to the cuticle, such that their contractions induce bending. 
The neuromuscular connectivity pattern along the body restricts muscle 
contractions to the dorso-ventral plane, limiting experimental studies and 
models to date to two dimensions. To move, the animal lies on its left or
right side, propagating body waves in the dorso-ventral plane, opposite the 
direction of motion. As we focus on nematodes, we therefore assume that the
worm is positioned on its side \citep{Hart2006}.

For a sufficiently slender worm, a model of a viscoelastic beam
captures the key properties of the biomechanics and kinematics of the
locomotion embedding in fluid which vary across a wide range of viscosities \citep{FanWyaXie10} and effective
viscoelasticities \citep{BerBoyTas09,BoyBerCoh12}. 
More recent models have instead modelled a
cylindrical shell \citep{SznPurKra10} highlighting the finite aspect ratio
of the animal (length to diameter ratio $\approx 12$) with attached muscles promoting the direct actuation of
the thin cuticle. Here, we model the cuticle by a thin viscoelastic shell, and
derive the effective force equations on the midline of the body.
% We interpret the viscoelastic shell assumption loosely since we effectively couple contributions for the cuticle and the interior of the worm.
% Further work is needed to decouple these contributions \citep{Gilpin2015}.
Note that throughout, our reference to the elasticity of the cuticle is used only loosely.
Strictly speaking, our elasticity term couples contributions for the cuticle and the interior of the worm. Further work is needed to decouple these contributions \citep{Gilpin2015}.

% In nematodes in general and \textit{C.\ elegans} in particular, body waves are
% generated in the dorso-ventral plane. To move, the animal lies on its left or
% right side, propagating body waves in the plane of motion. As we focus on
% nematodes, we therefore assume that the worm is positioned on its side.

We assume that the geometry of the worm at time can be described
through a parametrisation in cylindrical geometry:
\begin{linenomath*}
\begin{align*}
  \W(t) := \{ \vec{x}(u,t) + B(0,R(u)) : u \in [0,1] \} \subset \R^3,
\end{align*}
\end{linenomath*}
% \notes{TR: the previous notation is wrong. we not only have $\W(t) \subset \R^2 \times \{0\}$. i think this is reasonable.}
%
where $\vec{x}(\cdot, t) \colon [0,1] \to \R^2 \times \{ 0 \}$ is a
smooth parametrisation of the midline $\C(t)$ of $\W(t)$, $R(u)$ is the radius of the worm at $\vec{x}(u,t)$ and $B(0,R(u))$ is a disk radius $R(u)$ in the normal plane to $\C(t)$ at $\vec{x}(u,t)$.
We assume that the radius of the worm is fixed
in time but may vary along the parametrised length of the worm $u$ and that the worm has a constant density $\varrho$.

We will choose the dimensionless variable $u$ a material coordinate which corresponds monotonically to a body coordinate. Tracking a point $\vec{x}(u,t)$ in time corresponds to following a fixed marker within the worm. Thus $u$ is an anatomically and physiologically relevant physical variable.
Here, we take $\vec{x}(0,t)$ and $\vec{x}(1,t)$ to denote the head and tail of
the animal, respectively.
The choice of a potentially non-arclength parametrisation allows us to naturally impose the forces resulting from internal pressure whilst also giving greater flexibility when working with this geometry in a computational setting.
Furthermore, the choice of parametrisation provides a much more general framework for following material coordinates in a wider variety of soft-bodied locomotion including when considering organisms which vary in length \citep{Paoletti2014}.

%For simplicity, the length of the worm (approximately 1 mm in the adult) is fixed in the model and equal to $L$.

\subsection{Geometry}

We denote by $\vec{\tau}$ the unit tangent vector, by $\vec{\nu}$ the unit
normal vector in the plane $\R^2 \times \{0\}$ and by $\vec\kappa(u,t)$ the
vector curvature of $\C(t)$.  We can define these essential geometric
quantities using the parametrisation of the midline $\vec{x}$: 
\begin{linenomath*}
\begin{subequations}
  \begin{align}
    \label{eq:tau}
    \vec\tau( u, t ) = \vec\tau(
\vec{x}( u, t ), t ) & = \frac{ \vec{x}_u( u, t ) }{\abs{ \vec{x}_u( u, t ) }
}, \quad
    \vec\nu( u, t ) = \vec\nu( \vec{x}( u, t ), t )
%    = \vec\tau( \vec{x}( u, t ), t )
    = \left( \frac{ \vec{x}_u( u, t ) }{\abs{ \vec{x}_u( u, t ) } }
\right)^\perp \\
    \label{eq:kappa}
    \vec\kappa( u, t )
    = \vec\kappa( \vec{x}( u, t ), t )
    & = \frac{1}{\abs{ \vec{x}_u( u, t ) } } \frac{\partial}{\partial u} \vec\tau( u, t )
    =\frac{1}{\abs{ \vec{x}_u( u, t ) } } \frac{\partial}{\partial u} \left(
      \frac{ \vec{x}_u( u, t ) }{\abs{ \vec{x}_u( u, t ) } }
    \right),
  \end{align}
\end{subequations}
\end{linenomath*}
where we choose the orientation $(\xi_1, \xi_2, 0 )^\perp := ( -\xi_2, \xi_1, 0 )$ for the normal vector.
%
\begin{comment}
The above differs from the more commonly adopted Frenet frame which defines the
normal as the derivative of the tangent. Our definition of the normal will
ensure that the curvature function will be one-sided and well defined even when
$\partial\vec\tau/\partial u=0$. Note that consequently the curvature in our
notation can take on negative magnitudes. Note also that for convenience, we
have adopted a convention by which the tangent vector $\vec{\tau}$ points
backwards (from head to tail).
\end{comment}
%
It can be shown that the vector curvature points in the normal direction so that if $\vec\kappa \neq 0$, we can decompose $\vec\kappa = \kappa \vec\nu$ to determine the (signed) scalar curvature $\kappa$.
% The final equation \eqref{eq:kappa} is often called the Laplace-Beltrami identity since it shows that the curvature vector is the Laplace-Beltrami operator of the position vector.
The scalar curvature sign depends on the local convexity/concavity of the worm
with respect to the choice of orientation of $\vec{\nu}$.
The relevant quantities are shown in Figure~\ref{fig:geometry}.

\begin{remark}
  In previous work on \textit{C.\ elegans} locomotion the Frenet frame has been used to define the unit normal vector field as $\vec\nu := \vec\kappa / \abs{ \vec\kappa }$ \citep{Stephens2008}.
  The above definition remains valid when $\vec\kappa = 0$ and chooses a consistent normal direction.
\end{remark}

\begin{figure}[tb]
  \centering
  \includegraphics[width=\textwidth]{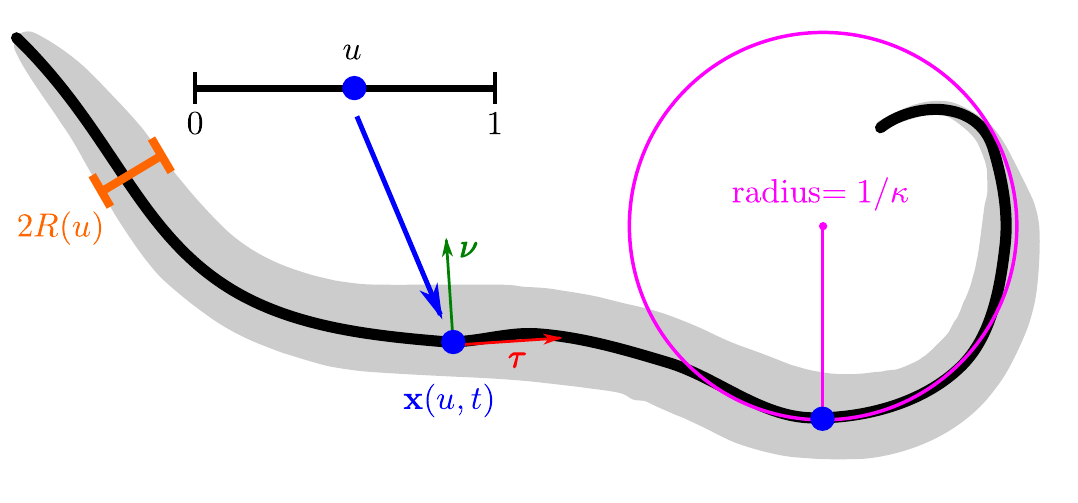}

  \caption{A sample worm posture marked up with our geometric description. The worm outline $\W$ in the plane is shown in grey and the midline $\C$ is shown in black which is parametrised by $\vec{x}$ with width $2 R$. The tangent vector $\vec{\tau}$ (red) and normal vector $\vec{\nu}$ (green) are shown. The pink osculating circle shows the radius of curvature which is the inverse of the absolute value of the scalar curvature and is equal to $\pm\abs{\kappa}$}.
  \label{fig:geometry}
\end{figure}

Although we will write our model with respect to the parameter $u$, we will also make use of an arc-length parameter $s = s(u)$ defined by
\begin{linenomath*}
\begin{align*}
  s = s(u) = \int_0^u \abs{ \vec{x}_u } \dd u.
\end{align*}
\end{linenomath*}
In particular $\partial / \partial s = \abs{\vec{x}_u}^{-1} \partial / \partial u$.
The variable $s$ represents the physical length along the body.

\subsection{Derivation of equations}

\subsubsection*{Conservation laws}

We start by considering conservation of mass and recall that we have assumed that the density of the worm, $\varrho$, is constant.
Let $\Gamma(t)$ be a portion of the worm that follows the motion of $\W(t)$ then
\begin{linenomath*}
\begin{align*}
  \frac{d}{dt} \int_{\Gamma(t)} \varrho \dd \Gamma = \varrho \frac{d}{dt} \int_{\Gamma(t)} \dd \Gamma = 0.
\end{align*}
\end{linenomath*}
% \notes{$\dd \Gamma$ is the Lebesgue measure on $\Gamma(t) \subset \R^3$ not on the curve}
%
In particular for $\Gamma(t) = \{ \vec{x}(u,t) + B( 0, R(u)) : u \in (u_1,u_2) \}$ for arbitrary $0 \le u_1 < u_2 \le 1$, we see that since $R(u)$ is fixed in time and $\varrho$ is a constant that
\begin{linenomath*}
\begin{align*}
  \frac{d}{dt} \int_{\Gamma(t)} \varrho \dd \Gamma
  = \varrho \frac{d}{dt} \int_{u_1}^{u_2} \pi R(u)^2 \abs{ \vec{x}_u } \dd u
  = \varrho \int_{u_1}^{u_2} \pi R(u)^2 \vec\tau \cdot \vec{x}_{u,t} \dd u = 0.
\end{align*}
\end{linenomath*}
Since this equation holds for all $0 \le u_1 < u_2 \le 1$ and $\varrho, R(u) > 0$, we have that
\begin{linenomath*}
\begin{align}
  \label{eq:mass-cons}
  \frac{d}{dt} \abs{ \vec{x}_{u} } = \vec\tau \cdot \vec{x}_{u,t} = 0.
\end{align}
\end{linenomath*}
We may integrate \eqref{eq:mass-cons} forwards in time to see that $\abs{\vec{x}_u}$ is invariant in time. This implies that the length of the midline curve is fixed. If the initial condition is a proportional to an arc-length parametrisation, then the parametrisation is proportional to an arc-length parametrisation at all times. We will use this in Section~\ref{sec:numerical-methods} to impose a parametrisation which is approximately proportional to arc-length. We will not directly substitute the initial values of $\abs{\vec{x}_u}$ into our geometry since we wish to be able to track the errors in our approximation.

We denote by $\vec{F}$ the internal force resultant, $\vec{M} = ( 0, 0, M )$
the internal moment resultant, and by $\vec{K}$ the external force per unit length at each cross section.
% Due to the worm's small size we neglect inertia.
Neglecting inertia \citep{FanWyaXie10,BerBoyTas09}
conservation of linear and angular momentum give the relations that \citep[Eq.\ 19.2,19.3]{LanLif75}
\begin{linenomath*}
\begin{align}
  \label{eq:cons-law}
  \vec{F}_s + \vec{K} = 0, \qquad
  \vec{M}_s + \vec\tau \times \vec{F} = 0,
\end{align}
\end{linenomath*}
We decompose $\vec{F} = p \vec{\tau} + N \vec\nu$ into transverse shear force $N$ and tangential force (line tension) $p$ and the moment $M = M^p + M^a$ into passive $M^p$ and active $M^a$ components.

\subsubsection*{Interaction with the environment}

We model the worm's interaction with the its surrounding fluid using slender body theory. In a low Reynolds number, Newtonian environment where the fluid boundaries effect are negligible, the environmental forces can be described using a Stokes system. The fluid system can be solved using the Stokeslet which reduces to normal $(K_{\vec{\nu}})$ and tangential $(K_{\vec\tau})$ drag coefficients.
This simplified formulation is also known as resistive force theory \citep{GRAY802}.
A key property of this system
is the ratio of drag coefficients, $K := K_{\vec\nu} / K_{\vec{\tau}}$.

\citet{Lig76} derived that, in water-like buffer solution, $\vec{K}$ should be given by
\begin{linenomath*}
\begin{align}
  \label{eq:rft}
  \vec{K} = -K_{\vec{\nu}} (\vec{x}_t \cdot \vec\nu) \vec\nu
  - K_{\vec{\tau}} (\vec{x}_t \cdot \vec\tau) \vec\tau, \quad
  K_{\vec{\nu}} = \frac{2 \pi \mu}{ \log(2 \alpha / R)},
  K_{\vec{\tau}} = \frac{4 \pi \mu}{ \log( 2 \alpha / R) + 0.5},
\end{align}
\end{linenomath*}
where $\mu$ is the dynamic viscosity of the fluid ($\approx 1 \, \unit{mPa} \, \unit{s}$ for water),
and $R$ is the radius of the body. The factor $\alpha$ depends on the precise waveform of undulations.
For sinusoidal and sufficiently small amplitude undulations with $\lambda = 1.5 \unit{mm}$, \citet{Lig76} showed that $\alpha \approx 0.09 \lambda$.
% Observational data \citep{BerBoyTas09} shows the variablilty of wavelength and amplitude along the body matter in \textit{C.\ elegans} undulations.
Note that the ratio $K$ is purely geometric entity in Newtonian fluids (independent of viscosity).

\begin{remark}
  Note that our definitions of $K_{\vec\nu}$ and $K_{\vec\tau}$ differ from \citet{BoyBerCoh12} by a factor of $L$ since $\vec{K}$ is force per unit length.
\end{remark}

Following 
\cite{BerBoyTas09}, we will assume that the resistive force description 
is permissible in the case of select viscoelastic (non-Newtonian) fluids.
As in this regime \eqref{eq:rft} no longer applies, we will use 
experimentally derived coefficients.

Recent works by \citet{Schulman2014} have used an extension of this theory to include the so-called wall effects from small scale swimmers moving near the boundary of the surrounding fluid.
\citet{Rabets2014} have also considered a nonlinear correction to this linear theory to account for nonlinear force-velocity relationships induced by viscoelastic effects in the surrounding fluid.
In this work, we assume that we remain in the linear regime as predicted by slender body theory and validated by \citep{BerBoyTas09} and limit our consideration to locomotion far from boundaries, so that wall effects are negligible.

\subsubsection*{Passive moment}

The passive moment is given by considering the stress over the disc $B(0,R(u))$ centered at each point along the midline:
\begin{linenomath*}
\begin{align*}
  M^p(u,t) = \int_{B(0, R(u))} \rho \sigma \dd A,
\end{align*}
\end{linenomath*}
where $\rho$ is the distance to the centre of $B(0,R(u))$, $\sigma$ is the stress, $\mathrm{d} A$ is the area element in $B(0,R(u))$.
The passive properties of the worm are given by assuming a linear Voigt model for the visco-elastic cuticle \citep{Fung1993,Linden1974} so that
\begin{linenomath*}
\begin{align*}
  \sigma = E \epsilon + \eta \epsilon_t,
\end{align*}
\end{linenomath*}
where $\epsilon$ is the strain, $E$ is the Young's modulus of the cuticle, and $\eta$ is the viscosity of the cuticle.
Since the mechanical stress arises in the cuticle, we have that $\sigma$ is zero away from the thin
cuticle.
This continuum model of the viscoelastic properties of the cuticle more
naturally
captures the forces given by the parallel springs and dampers in the articulated model of \citet{BoyBerCoh12}.

Assuming that we only allow deformations along the body (no rotational or radial deformations), we can compute from our cylindrical parametrisation that \citep[Sec.\ 17]{LanLif75}
\begin{linenomath*}
\begin{align*}
  \epsilon( u, \rho, t ) =  \rho \kappa( u, t ).
\end{align*}
\end{linenomath*}
Here, we use a reference configuration of a straight line of length $L$ and
applied the conservation of mass. We infer that the passive moment is given by
\begin{linenomath*}
\begin{align}
  \label{eq:moment}
  M^p = E I_2 \kappa + \eta I_2 \kappa_t.
\end{align}
\end{linenomath*}
Here $I_2$ is the second moment of area of the cuticle given by
\begin{linenomath*}
\begin{equation*}
I_2(u) = \int_{B(0,R(u))} \rho^2 \dd A =
\frac{\pi}{2} \left[ ( R +r_{\mathrm{cuticle}} / 2 )^4 - ( R -r_{\mathrm{cuticle}} / 2 )^4 \right],
\end{equation*}
\end{linenomath*}
where $r_{\mathrm{cuticle}}$ is the thickness of the cuticle.

\begin{remark}
  Recent experimental evidence has shown that the properties of the cuticle may not follow such a simple relationship. \citet{Backholm2015} propose that a power-law fluid model which was shown to be in good agreement with experimental results.
  %This is consistent with the study of \citet{SznPurKra10} who found a that the force-velocity curve for active muscles should have a negative slope.
  For this preliminary work, we pursue this simple model and leave more complicated constitutive relations to future work.
\end{remark}

\subsubsection*{Internal pressure}

We have modelled our worm in an idealised situation by strictly enforcing a strict inextensibility constraint (\ref{eq:mass-cons}).
In this situation, instead of using a constitutive relation for the resulting line tension of $\C$, which we denote by $p \vec{\tau}$, we treat the magnitude of the line tension as a dynamic variable and must be determined from the other equations in our model \citep{COOMER_2001}.
In this way, we see that $p \vec\tau$ may be considered as the effective force resulting from internal pressure, or mathematically a Lagrange multiplier, which enforces mass conservation.
We note that such a treatment of pressure is conventional in fluid mechanics \citep{Batchelor2000}.

\subsubsection*{Model and boundary conditions}

We consider the above equations over a fixed time interval $[0,T]$.
We assume that a smooth initial condition $\vec{x}_0$ is given which satisfies $\abs{ \vec{x}_{0,u} } \ge \gamma > 0$ for some positive constant $\gamma$.
Combining the conservation laws \eqref{eq:mass-cons} and \eqref{eq:cons-law} with the external force \eqref{eq:rft} and passive moment \eqref{eq:moment} leads to the following system of equations which we now write with respect to the parameter $u$:

Given an initial condition $\vec{x}_0$ and an active moment $M^a$,
find $\vec{x} \colon [0,1] \times (0,T) \to \R^2 \times \{ 0 \}$ and $p \colon [0,1] \times (0,T) \to \R$ such that
\begin{linenomath*}
\begin{equation}
\begin{aligned}
  \label{eq:model}
  K_{\vec{\nu}} (\vec{x}_t \cdot \vec\nu) \vec\nu
  + K_{\vec{\tau}} (\vec{x}_t \cdot \vec\tau) \vec\tau
  - \frac{1}{\abs{\vec{x}_u}} ( p \vec\tau )_u \qquad\qquad \\
  + \frac{1}{\abs{\vec{x}_u}} \left( \frac{E I_2}{\abs{\vec{x}_u}} \kappa_u \vec{\nu} + \frac{\eta I_2}{\abs{\vec{x}_u}} \kappa_{tu} \vec{\nu} + \frac{1}{\abs{\vec{x}_u}} M^a_u \vec\nu \right)_u & = 0 \\
  \vec{\tau} \cdot \vec{x}_{tu} & = 0 \\
  \vec{x}( \cdot, 0 ) & = \vec{x}_0.
\end{aligned}
\end{equation}
\end{linenomath*}
%
%\notes{there is a subscript $u$ after the bracket so we can't move $\nu$ outside the bracket}

We close the system by imposing zero moment and zero force at the two boundary ends:
\begin{linenomath*}
\begin{align*}
  - p \vec{\tau}
  + \left( \frac{E I_2}{\abs{\vec{x}_u}} \kappa_u + \frac{\eta I_2}{\abs{\vec{x}_u}} \kappa_{tu}
  + \frac{1}{\abs{\vec{x}_u}} M^a_u \right) \vec\nu & = 0 && \mbox{ for } u = 0,1, \\
  E I_2 \kappa + \eta I_2 \kappa_t
  + M^a & = 0 && \mbox{ for } u = 0,1.
\end{align*}
\end{linenomath*}

\begin{comment}
\begin{remark}
  The constraint equation in \eqref{eq:model} can be integrated forwards in time to see that $\abs{ \vec{x}_u } = \abs{ \vec{x}_{0,u} }$. This implies that if the initial condition is a proportional to an arc-length parametrisation, then the parametrisation is proportional to an arc-length parametrisation at all times.
\end{remark}
\end{comment}

\subsubsection*{Active moment}

We conclude this section with a brief discussion of the active moment.
In previous studies of worm locomotion, authors have developed integrated models which close the loop by including a nervous system which takes input from the body shape and outputs muscle deformations which lead to an active moment. 
In \textit{C.\ elegans}, body wall muscles line the cuticle (or body wall). 
We assume a smooth and uniform muscle configuration along the body such that 
the muscle midline parametrisation is identical to that of the worm's cuticle 
$u$. In \citet{BoyBerCoh12}, the muscle and cuticle lengths were considered 
as one.
In contrast, here, we allow muscle lengths to drive the bending 
dynamics of the passive cuticle.

Consider a fixed muscle configuration so that $M^a(u,t) = M^a(u)$ and suppose the body of the worm is allowed to relax to stationary posture. As time tends to infinity, we see that
\begin{linenomath*}
\begin{align*}
  M^a = -E I_2 \kappa.
\end{align*}
\end{linenomath*}
In other words, for a stationary worm,
the active moment is proportional to the resulting curvature. Indeed, we can take the same scaling and consider that $M^a$ is a preferred curvature in a similar fashion to the models of \citet{Helfrich1973} for elastic lipid bilayers.
We denote the preferred curvature arising from the muscle activation profile by $\beta$ and will use the relation $M^a = - E I_2 \beta$.

\begin{comment}
\begin{figure}[tb]
  \centering
  \includegraphics{figs/muscle}
  \caption{The different idealised muscle configurations showing the shape of the worm with a given muscle configuration at long time. Left shows $L^+ > L^-$ with positively curved midline, middle shows $L^+ = L^-$ with zero curvature and right shows $L^+ < L^-$ with negatively curved midline.}
  \label{fig:muscles}
\end{figure}

Furthermore, we can locally link the function $\beta$ with the lengths of muscles on either side of the body.
We consider a small segment of the worm so that variations in the radius are negligible in a set up as shown in Figure~\ref{fig:muscles}.
If the top muscle (in the direction $\vec\nu$) has length $L^+$ and the bottom muscle (opposite the direction $\vec\nu$) has length $L^-$, the curvature of the midline is given by
%
\begin{linenomath*}
\begin{align*}
  \beta(s) = \frac{1}{R} \frac{ L^+(s) - L^-(s) }{ L^+(s) + L^-(s) }.
\end{align*}
\end{linenomath*}
%
In particular, if $L^+ > L^-$ we have positive preferred curvature, if $L^- > L^+$ we have negative preferred curvature and if $L^+ = L^-$ we have zero curvature.
%
Thus, if we are given the muscle lengths along the body, we may compute that
%
\begin{linenomath*}
\begin{align*}
  M^a = -E I_2 \beta = -\frac{E I_2}{R} \frac{ L^+ - L^- }{ L^+ + L^- }.
\end{align*}
\end{linenomath*}
\end{comment}

Many previous experimental 
\citep{Feng2004,Cronin2005,Karbowski2006,Ramot2008,BerBoyTas09} 
%\citep{Feng2004,Cronin2005,Karbowski2006,Korta2007,Ramot2008,BerBoyTas09} 
and theoretic studies \citep{BoyBerCoh12,Butler2014,Backholm2015a} have characterised the muscle activation profile for forwards locomotion.
The undulations of the animal are well described by a travelling wave for which 
the undulation amplitude, angular frequency and wavelength all vary with 
the resistivity of the fluid environment \citep{BerBoyTas09}. For simplicity
(neglecting the variation of wavelength along the body),
we take the preferred curvature $\beta$ to be given by a periodic travelling wave of the form:
\begin{linenomath*}
\begin{align}
  \label{eq:defn-travelling-wave}
  \beta(u,t) = \beta_0(u) \Psi\left( \frac{2 \pi L u}{\lambda} - 2 \pi t \omega \right),
\end{align}
\end{linenomath*}
where $\Psi$ is a $2\pi$-periodic function, $\beta_0$ is the amplitude, $\lambda$ is the wavelength of undulations and $\omega$ is the angular frequency. Note
that here, the driving force $\beta$ follows the material coordinate $u$ (corresponding to the physiological muscle locations).

\begin{comment}
\citet{PierceShimomura2008}

\begin{verbatim}
11. Korta, J., D. A. Clark, ., A. D. Samuel. 2007. Mechanosensation and
mechanical load modulate the locomotory gait of swimming C. elegans.
J. Exp. Biol. 210:2383–2389.
12. Pierce-Shimomura, J. T., B. L. Chen, ., S. L. McIntire. 2008. Genetic
analysis of crawling and swimming locomotory patterns in C. elegans.
Proc. Natl. Acad. Sci. USA. 105:20982–20987.
13. Karbowski, J., C. J. Cronin, ., P. W. Sternberg. 2006. Conservation
rules, their breakdown, and optimality in Caenorhabditis sinusoidal
locomotion. J. Theor. Biol. 242:652–669.
16. Cronin, C. J., J. E. Mendel, ., P. W. Sternberg. 2005. An automated
system for measuring parameters of nematode sinusoidal movement.
BMC Genet. 6:5.
17. Feng, Z., C. J. Cronin, ., W. R. Schafer. 2004. An imaging system for
standardized quantitative analysis of C. elegans behavior. BMC Bioinformatics.
5:115.
18. Ramot, D., B. E. Johnson, ., M. B. Goodman. 2008. The Parallel
Worm Tracker: a platform for measuring average speed and drug induced
paralysis in nematodes. PLoS One. 3:e2208.
\end{verbatim}

\citet{Butler201} need to read in detail \url{refs/ButBraYem14.pdf}

\citet{Backholm2015a} need to read in detail \url{refs/BacKasSch15.pdf}
\end{comment}

\subsection{Non-dimensionalisation and parameter choices}
\label{sec:non-dimens-param}

In this section, we introduce characteristic values for certain parameters in order to identify key relations within the system. In this way, we translated from important parameters for individual components of the model to important parameters of the system. All used parameters are given with references in Table~\ref{tab:parameters}.

The anatomy of the worm is very well characterised and highly consistent across individuals \citep{wormbook}.
We take $L = 1 \unit{mm}$ and $R_{\mathrm{cuticle}} = 0.5 \unit{\mu m}$.
The radius of the worm is typically given by $\bar{R} = 40 \unit{\mu m}$, but in fact varies along its length. We describe this using the function
\begin{linenomath*}
\begin{align*}
  R( u ) = \bar{R} \frac{2 \big( (\varepsilon/L + u)( \varepsilon/L + 1 - u ) \big)^{\frac{1}{2}} }{ {1+2\varepsilon/L} }
\end{align*}
\end{linenomath*}
where $\varepsilon$ is a small distance chosen to regularise so that the worm always has strictly positive radius. The function is chosen to approximate the anatomical shape of the worm when the material coordinate $u$ is approximately proportional to arc-length (see below).

The material parameters of the worm are less well characterised with some estimates varying five order in magnitude. 
Of particular interest here is the choice of Young's modulus $E$ for the 
cuticle of the worm, with estimates varying from $\approx 1 \unit{kPa}$ (similar to that of sea anemonies) to
$\approx 100 \unit{MPa}$ roughly corresponding to the stiffness of rubber
\citep{Vog88,Vincent2004}.
By comparison, the elasticity of elastin (a key constituent of vertebrate 
skin and connective tissue) is estimated at $\approx 1 \unit{MPa}$. Insect 
cuticles alone, from different animals, developmental stages and conditions, 
cover a vast range of $E$ values from $\approx 10 \unit{kPa}$ in some larvae to 
even to $\approx 10 \unit{GPa}$.

The first approach to determining the elastic behaviour of the worm was by \citet{ParGooPru07}.
The authors used a piezoresistive displacement clamp and saw a linear relationship between force and displacement and that the elasticity of the cuticle shell dominates the worm's body stiffness.
They estimate a value of $E$ to be $380 \unit{MPa}$.
However, the authors also note that the stiffness of the cuticle is likely to be different in the longitudinal and circumferential directions and likely that the longitudinal bending constant is lower.
Another approach was given by \citet{SznPurKra10} who fitted parameters in a simplified version of the model presented in this paper to high speed video recordings of freely moving worms.
Based on an assumed ansatz for the muscle activation profile, they estimate that $E = 3.77 \pm 0.62 \unit{kPa}$ and $\eta = -860.2 \pm 99.4 \unit{Pa} \unit{s}$ and that these values may vary with the environmental conditions (viscosity of the surrounding fluid) \citep{Sznitman2010}.
However, these results are sensitive to the choice of muscle activation ansatz and do not decouple passive and active muscle effects.
A third approach was given by \citet{FanWyaXie10} to augment the procedure of \citet{SznPurKra10} by trapping the head of the worm in a micropipette with a second micropipette used to deflect the worm. The authors then studied the relaxation time of the worm's shape.
Using the simplified version of our model, they estimated that $E = 13 \unit{MPa}$ and $\eta < 68 \unit{kPa} \unit{s}$.
More recently, \citet{BacRyuDal13} have use a similar micropipette deflection technique with an alternative model that estimates $E = 1.3 \pm 0.3 \unit{MPa}$.
Discounting, the extreme values due to different possible side effects of the methodologies used we will take values of $E = 10 \unit{MPa}$ and $\eta = 50 \unit{kPa} \unit{s}$ (except where otherwise stated).

We consider a range of different viscous and viscoelastic fluid environments.
For Newtonian media, $K \approx 1.5$ and the viscosity values range experienced in common experimental conditions varies by more than 6 orders of magnitude \citep{FanWyaXie10}.
Parameter values for two special cases of the lower limiting case of buffer solution (denoted liquid) and
viscosity yielding similar locomotion to that found on agar are
given in Table~\ref{tab:parameters}. For non-Newtonian
media, we limit ourselves to linear viscoelastic fluids, in which
the undulations are well described by the ratio of drag coefficients 
$K_{\liquid}\le K < 100$ \citep{BerBoyTas09}, but the corresponding
values of effective drag coefficients are not well known \citep{BoyBerCoh12}.

For liquid we can use the formulae given in \eqref{eq:rft} to determine the local drag coefficients experienced by the worm. Using the viscosity of liquid ($\mu \approx 1 \unit{mPa} \unit{s}$) and typical dimensions of the worm's locomotion ($q = 0.09 \lambda = 1.35 \cdot 10^{-4} \unit{m}$, $L = 1 \unit{mm}$) and typical radius $\bar{R}$ to give
\begin{linenomath*}
\begin{align*}
  K_{\vec\nu,\liquid} = 5.2 \cdot 10^{-3} \unit{kg} \unit{m}^{-1} \unit{s}^{-1}, \qquad
  K_{\vec\tau,\liquid} = 3.3 \cdot 10^{-3} \unit{kg} \unit{m}^{-1} \unit{s}^{-1}.
\end{align*}
\end{linenomath*}
For agar, \citet{Wallace1969} estimated the tangential drag coefficients by measuring the force required to pull glass fibres of similar dimension to \textit{C.\ elegans} across the surface of the gel. This method has been augmented by \citet{Niebur1991} and \citet{BerBoyTas09} to give the coefficients we will use in this work. The process results in:
\begin{linenomath*}
\begin{align*}
  K_{\vec\nu,\agar} = 128 \unit{kg} \unit{m}^{-1} \unit{s}^{-1}, \qquad
  K_{\vec\tau,\agar} = 3.2 \unit{kg} \unit{m}^{-1} \unit{s}^{-1}.
\end{align*}
\end{linenomath*}

\begin{table}[tb]
  \centering
  \footnotesize
  \begin{tabular}{c|l|l|l}
    Parameter & Name & Typical value & Notes and references \\
    \hline
    $L$ & Worm length & $1 \unit{mm}$ & \cite{wormbook} \\
    $\bar{R}$ & Worm radius & $40 \unit{\mu m}$ & \cite{wormbook}\\
%\cite{ParGooPru07,SznPurKra10}; \\ &&& \cite{BoyBerCoh12} \\
    $r_{\cuticle}$ & Cuticle width & $0.5 \unit{\mu m}$ & \cite{ParGooPru07};\\
&&&\cite{SznPurKra10} \\
    $E$ & Young's modulus & $10 \unit{MPa}$ & \cite{ParGooPru07};\\
&&&\cite{SznPurKra10}; \\ &&& \cite{FanWyaXie10};\\&&&\cite{BacRyuDal13} \\
    $\eta$ & Cuticle viscosity & $50 \unit{kPa} \unit{s}$ & \cite{SznPurKra10};\\
&&&\cite{FanWyaXie10} \\
    $(K_{\vec\nu}, K_{\vec\tau} )_{\liquid}$ & Drag coefficients & $(5.2, 3.3) \cdot 10^{-3} \unit{kg} \unit{m}^{-1} \unit{s}^{-1}$  & \cite{Lig76};\\
    &  \hspace{1cm}  (buffer solution) && \cite{BoyBerCoh12} \\
    $(K_{\vec\nu}, K_{\vec\tau} )_{\agar}$ &  Drag coefficients (agar) & $(128, 3.2) \unit{kg} \unit{m}^{-1} \unit{s}^{-1}$ & \cite{BoyBerCoh12} \\
    $\beta_0$ & Muscle wave amplitude (agar) & $8 \unit{mm}^{-1}$ & \cite{BoyBerCoh12} \\
    $(\omega, \lambda)_{\liquid}$ & Muscle wave properties \eqref{eq:defn-travelling-wave} & $(1.76 \unit{s}^{-1}, 1.54 \unit{mm})$ & \cite{FanWyaXie10} \\
    &  \hspace{1cm}  (buffer solution) && \\
    $(\omega, \lambda)_{\agar}$ &  Muscle wave properties \eqref{eq:defn-travelling-wave} & $(0.30 \unit{s}^{-1}, 0.65 \unit{mm})$ & \cite{FanWyaXie10} \\
    & \hspace{1cm} (agar) && \\
  \end{tabular}
  \caption{List of used parameters with sources.}
  \label{tab:parameters}
\end{table}

With typical values for each of the parameters fixed, we can now rescale the equations.
We take a characteristic length scale $x_c$, characteristic time scale $t_c$ and characteristic line tension $p_c$ and introduce new non-dimensional variables by
\begin{linenomath*}
\begin{align*}
  \tilde{\vec{x}} = \vec{x} / x_c, \qquad
  \tilde{t} = t / t_c, \qquad
  \tilde{p} = p / p_c,
\end{align*}
\end{linenomath*}
Note that $u$ is already non-dimensional and these scalings induce a new scaled curvature $\tilde{\kappa} = \kappa x_c$ and final time $\tilde{T} = T / t_c$.
Since $r_{\cuticle} \ll \bar{R}$, we see that
\begin{linenomath*}
\begin{align*}
  I_2 \approx 2 \pi R^3 r_{\cuticle},
\end{align*}
\end{linenomath*}
which we rescale as
\begin{linenomath*}
\begin{align*}
  \tilde{I}_2(u) = I_2 / I_c = \frac{ 8 \big( ( \varepsilon/L + u ) ( \varepsilon/L + 1 - u ) \big)^{3/2} }{ ( 1 + 2 \varepsilon/L )^3}, \qquad \mbox{ where } I_c = 2 \pi \bar{R}^3 r_{\cuticle}.
\end{align*}
\end{linenomath*}
Note that $\tilde{I}_2(L/2)=1$.
The values in Table~\ref{tab:parameters} give that $I_c = 2.0 \cdot 10^{-19} \unit{m}^4$.
We nondimensionalise the preferred curvature due the muscle forcing as:
\begin{linenomath*}
\begin{align*}
  \tilde{\beta}(u,t) = (\beta_0 x_c) \Psi\left( \frac{2 \pi u}{\tilde{\lambda}} - 2 \pi \tilde{t} \tilde{\omega} \right),
\end{align*}
\end{linenomath*}
where $\tilde{\lambda} = \lambda / L$ and $\tilde{\omega} = \omega t_c$.

For freely moving worms on agar, we choose:
\begin{linenomath*}
\begin{align*}
  x_c = L = 1.0 \unit{mm}, \qquad
  t_c = 1/\omega = 3.3 \unit{s}, \qquad
  p_c = \frac{K_{\vec{\nu},\text{agar}} x_c^2}{t_c} =
  9.6 \cdot 10^{-7}  \unit{N}.
\end{align*}
\end{linenomath*}
This results in:
\begin{linenomath*}
\begin{align*}
  K (\tilde{\vec{x}}_{\tilde{t}} \cdot \tilde{\vec\nu}) \tilde{\vec\nu}
  + ( \tilde{\vec{x}}_{\tilde{t}} \cdot \tilde{\vec\tau}) \tilde{\vec\tau}
  - \frac{1}{\abs{\tilde{\vec{x}}_u}} ( \tilde{p} \tilde{\vec\tau} )_u \qquad\qquad \\
  +
  e \frac{1}{\abs{\tilde{\vec{x}}_u}} \left( \frac{\tilde{I}_2}{\abs{\tilde{\vec{x}}_u}} (\tilde{\kappa} - \tilde{\beta} )_u \tilde{\vec{\nu}} \right)_u
  + \tilde{\eta} \frac{1}{\abs{\tilde{\vec{x}}_u}} \left( \frac{\tilde{I}_2}{ \abs{\tilde{\vec{x}}_u}} \tilde{\kappa}_{\tilde{t}u} \tilde{\vec{\nu}} \right)_u
  & = 0 \\
  \left. - \tilde{p} \tilde{\vec{\tau}}
  + e \left( \frac{\tilde{I}_2}{\abs{\tilde{\vec{x}}_u}} ( \tilde{\kappa}-\tilde{\beta})_u \right) \tilde{\vec\nu}
  \right|_{u=0,1} & = 0 \\
  \left. e \tilde{I_2} (\tilde\kappa - \tilde\beta) \right|_{u=0,1} & = 0 \\
  & = 0 \\
  \tilde{\vec{\tau}} \cdot \frac{1}{\abs{\tilde{\vec{x}}_u}} \tilde{\vec{x}}_{\tilde{t}u} & = 0 \\
  \tilde{\vec{x}}( \cdot, 0 ) & = \tilde{\vec{x}}_0,
\end{align*}
\end{linenomath*}
where
\begin{linenomath*}
\begin{align*}
  K := K_{\vec{\nu},\mathrm{agar}} / K_{\vec{\tau},\mathrm{agar}}
  =  39,
  \qquad
  e := \frac{E I_c t_c}{x_c^4 K_{\vec{\tau},\mathrm{agar}}}
  = 1.0 \\
  \qquad \mbox{ and } \qquad
  \tilde\eta := \frac{\eta I_c}{x_c^4 K_{\vec{\tau},\mathrm{agar}}}
  = 1.6 \cdot 10^{-3} \ll 1.
\end{align*}
\end{linenomath*}

The results in agar are consistent with findings of
\citet{Niebur1991,FanWyaXie10} and \citet{BoyBerCoh12} that internal viscosity of the worm
is negligible for locomotion in sufficiently resistive environments. We remark,
however, that by considering the worm's locomotion in buffer solution we would
recover $\tilde{\eta} \approx 1$ and so internal viscosity is important in this
case. Motivated by this, in this study, we focus our attention on conditions
where $\tilde\eta \ll 1$. By neglecting viscous contributions, we are able to
focus on contributions of internal elastic forces on the dynamics of the body.  

In what follows, we proceed with the non-dimensional formulation, but revert 
to original variable labels. The resulting system we consider is as follows.
Given an initial condition $\vec{x}_0$ and an active muscle moment $\beta$,
find $\vec{x} \colon [0,1] \times [0,T] \to \R^2 \times \{ 0 \}$ and $p \colon [0,1] \times [0,T] \to \R$ such that
\begin{linenomath*}
\begin{equation}
\begin{aligned}
  \label{eq:nondim-model}
  K (\vec{x}_t \cdot \vec\nu) \vec\nu
  + (\vec{x}_t \cdot \vec\tau) \vec\tau
  - \frac{1}{\abs{\vec{x}_u}} ( p \vec\tau )_u \qquad\qquad \\
  + \frac{e}{\abs{\vec{x}_u}} \left( \frac{I_2}{\abs{\vec{x}_u}} (\kappa-\beta)_u \vec{\nu} \right)_u & = 0 \\
  \vec{\tau} \cdot \vec{x}_{tu} & = 0 \\
  \left. - p \vec{\tau}
  + e \left( \frac{I_2}{\abs{\vec{x}_u}} (\kappa-\beta)_u \right) \vec\nu
  \right|_{u=0,1}
  & = 0 \\
  \left. e I_2 (\kappa -\beta) \right|_{u=0,1} & = 0 \\
   \vec{x}( \cdot, 0 ) & = \vec{x}_0.
\end{aligned}
\end{equation}
\end{linenomath*}

\section{Numerical methods}
\label{sec:numerical-methods}

The key idea of the numerical method is to first the equations in a suitable weak formulation which results in three variables to solve for: parametrisation $\vec{x}$, curvature difference $\vec{y} = \vec\kappa - \beta\vec\nu$ and pressure $p$.
The parametrisation and curvature difference are transferred to discrete representation by their values at the nodes of a mesh, with piecewise linear interpolation between, and pressure is transferred to a piecewise constant variable.
The time discretisation is similar to a first-order backward Euler discretisation, however we treat the geometric terms explicitly (i.e.\ using the values from the previous time step).
The result is a system of linear equations (saddle point problem) to solve at each time step where the solution approximates the solution of the nonlinear model.

\subsection{Weak formulation}

We take a smooth test function $\vec\phi \colon [0,1] \to \R^2 \times \{ 0 \}$ and take the scalar product of \eqref{eq:nondim-model} with $\vec\phi$ and integrate with respect to the measure $(\abs{ \vec{x}_u } \dd u)$ to see
\begin{linenomath*}
\begin{multline*}
  \int_0^1 ( K \vec{x}_t \cdot \vec\nu \vec\nu + \vec{x}_t \cdot \vec\tau \vec\tau ) \cdot \vec\phi \abs{ \vec{x}_u } \dd u
  - \int_0^1 ( p \vec\tau )_u \cdot \vec\phi \dd u \\
  + e \int_0^1 \left( \frac{I_2}{\abs{ \vec{x}_u }} (\kappa-\beta)_u \vec\nu \right)_u \cdot \vec\phi \dd u = 0
\end{multline*}
\end{linenomath*}
We integrate by parts all but the first term (the boundary terms disappear due to the boundary condition):
\begin{linenomath*}
\begin{multline}
  \label{eq:weak-form}
  \int_0^1 ( K \vec{x}_t \cdot \vec\nu \vec\nu + \vec{x}_t \cdot \vec\tau \vec\tau)
  \cdot \vec\phi \abs{ \vec{x}_u } \dd u
  + \int_0^1 p \vec\tau \cdot \vec\phi_u \dd u \\
  - e \int_0^1 \frac{I_2}{\abs{ \vec{x}_u }} (\kappa - \beta)_u \vec\nu \cdot \vec\phi_u \dd u = 0
\end{multline}
\end{linenomath*}
In addition, for the incompressible constraint we multiply by a smooth function $q$:
\begin{linenomath*}
\begin{align}
  \label{eq:weak-incomp}
  \int_0^1 q \vec{\tau} \cdot \vec{x}_{tu} \dd u = 0.
\end{align}
\end{linenomath*}

The formulation of \eqref{eq:weak-form} and \eqref{eq:weak-incomp} takes an initial condition for $\vec{x}$ and active moment $\beta$ as data. The problem is to find the two unknowns $\vec{x}$ and $p$ from the two equations such that these equations hold for all smooth functions $\vec{\phi}$ and $q$.

\subsection{Semi-discrete finite element method}

We discretise in space using a mixed $P_1$-$P_0$ mixed finite element method.
The key ideas for the numerical scheme is to augment a parametric finite element method which uses piecewise linear functions for elasticae proposed by \citet{DziKuwSch02}, analysed by \citet{DecDzi09}, with piecewise constant functions for the pressure term. We believe this approach is well suited due to the saddle point nature of the equations.
We denote spatially discrete quantities with a subscript $h$ which relates to the mesh spacing.

We divide the parameter domain $(0,1)$ into $N$ intervals given by $e_j = [ u_j, u_{j+1} ]$ for $j=0,\ldots,N-1$ where $0 = u_0 < u_1 < \ldots < u_N = 1$. We denote by $h_j = u_{j+1} - u_j$ the element size of each element and $h = \max_{j} h_j$.
The space $V_h$ is the space of continuous vector valued piecewise linear finite element functions:
\begin{linenomath*}
\begin{align*}
  V_h := \{ \vec{\phi}_h \in C(0,1 ; \R^2 \times \{0\} ) : \vec{\phi}_h |_{e_j} \mbox{ is affine, for } j = 0,\ldots,N-1 \}.
\end{align*}
\end{linenomath*}
Here, $\vec{\phi}_h |_{e_j}$ denotes the restriction of $\vec{\phi}_h$ to the domain $e_j = [u_j, u_{j+1}]$.
We denote by $I_h$ the standard interpolation operator from continuous functions into $V_h$.
We also use the notation $V_{h,0}$ for the subspace of finite element functions in $V_h$ which take the value zero at $u=0,1$.
The space $Q_h$ is the space of scalar valued piecewise constant functions:
\begin{linenomath*}
\begin{align*}
  Q_h := \{ q_h \in L^2( 0, 1 ) : q_h |_{e_j} \mbox{ is constant for } j = 0, \ldots,N-1 \}.
\end{align*}
\end{linenomath*}

For a discrete parametrisation $\vec{x}_h \in V_h$, we can take an element-wise derivative of $\vec{x}_h$ to define the discrete tangent vector given by $\vec\tau_h := \vec{x}_{h,u} / \abs{ \vec{x}_{h,u} }$. This is a piecewise constant vector valued function. In our semi-discrete scheme, we use $\vec\tau_h$ in order to define local tangential and normal projections $\P_\tau^h$ and $\P_\nu^h$ given by
\begin{linenomath*}
\begin{align*}
  \P_\tau^h \vec\xi = ( \vec\xi \cdot \vec\tau_h ) \vec\tau_h, \qquad
  \P_\nu^h \vec\xi = \vec\xi - ( \vec\xi \cdot \vec\tau_h ) \vec\tau_h.
\end{align*}
\end{linenomath*}
We also require a discrete normal in order to compute the muscle forcing terms. In fact, we will require a discrete normal at each node, hence we will compute with the mean of the normal vector on each side of the node:
\begin{linenomath*}
\begin{align*}
  \tilde{\vec\nu}_h( u_j ) =
  \begin{cases}
    \vec\tau_h^\perp(e_0) & \mbox{ for } j = 0 \\
    \frac{1}{2} \left( \vec\tau_h^\perp(e_j) + \vec\tau_h^\perp(e_{j+1} ) \right) & \mbox{ for } j = 1, \ldots, N-1 \\
    \vec\tau_h^\perp(e_{N}) & \mbox{ for } j = N.
  \end{cases}
\end{align*}
\end{linenomath*}

Note this it is not possible to directly take a derivative of $\vec{\tau}_h$ in order to define a discrete curvature vector. Instead we use a weak form. Multiplying the Laplace-Beltrami identity \eqref{eq:kappa} by a smooth function $\vec\psi$ which is zero at 0 and 1 gives:
\begin{linenomath*}
\begin{align*}
  \int_0^1 \vec\kappa \cdot \vec\psi \abs{ \vec{x}_u } \dd u
  = - \int_0^1 \frac{\vec{x}_u}{\abs{\vec{x}_u}} \cdot \vec\psi_u \dd u.
\end{align*}
\end{linenomath*}
We will use a new variable $\vec{y} = \vec\kappa - \beta \vec\nu$. This results in a second-order splitting scheme \citep{EllFreMil89} which allows the use of non-conforming piecewise linear finite element functions.
We also use mass lumping \citep{Tho84} in order to further simplify the resulting system of equations.
We use the notations
\begin{linenomath*}
\begin{align}
  \label{eq:mass-lumping}
  ( \vec{\xi}, \vec\eta )_h = I_h( \vec{\xi} \cdot \vec\eta ), \qquad
  \abs{ \vec\xi }_h^2 = ( \vec{\xi}, \vec{\xi} )_h.
\end{align}
\end{linenomath*}

We assume we are given an initial condition $\vec{x}_{h,0} \in V_h$ that satisfies $\abs{ \vec{x}_{h,0} } \ge \gamma > 0$ for some positive constant $\gamma$. In the analysis, we take an initial value for $\vec{y}_{h,0} \in V_{h,0}$ given as the solution of
\begin{linenomath*}
\begin{align*}
  \int_0^1 ( \vec{y}_{h,0}, \vec{\psi}_h )_h \abs{ \vec{x}_{h,0,u} } \dd u
    + \int_0^1 \frac{1}{\abs{\vec{x}_{h,0,u}}} \vec{x}_{h,0,u} \cdot \vec{\psi}_{h,u} \dd u
    & \\
    \nonumber
    + \int_0^1 ( \beta \tilde{\vec\nu}_h, \vec\psi_h )_h \abs{ \vec{x}_{h,0,u} } \dd u
    & = 0
    && \mbox{ for all } \vec\psi_h \in V_{h,0}.
\end{align*}
\end{linenomath*}

The semi-discrete scheme is:
Given $\vec{x}_{h,0}$ and a muscle activation forcing function $\beta$, find for all $t \in [0,T]$, $\vec{x}_h(\cdot,t) \in V_h$, $\vec{y}_h(\cdot,t) \in V_{h,0}$ and $p_h(\cdot,t) \in Q_h$ such that
\begin{linenomath*}
\begin{subequations}
  \label{eq:semi-discrete}
  \begin{align}
    \label{eq:semi-discrete-a}
    \int_0^1 ( K \P_\nu^h + \P_\tau^h ) ( \vec{x}_{h,t}, \vec{\phi}_h )_h \abs{ \vec{x}_{h,u} } \dd u
    & \\
    \nonumber
    + \int_0^1 p_h \vec\tau_h \cdot \vec\phi_{h,u} \dd u
    - e \int_0^1 \frac{I_2}{\abs{\vec{x}_{h,u} }} \P_\nu^h \vec{y}_{h,u} \cdot \vec{\phi}_{h,u}
    & = 0
    && \mbox{ for all } \vec\phi_h \in V_h \\
    \label{eq:semi-discrete-b}
    \int_0^1 ( \vec{y}_h, \vec{\psi}_h )_h \abs{ \vec{x}_{h,u} } \dd u
    + \int_0^1 \frac{1}{\abs{\vec{x}_{h,u} }} \vec{x}_{h,u} \cdot \vec{\psi}_{h,u} \dd u
    & \\
    \nonumber
    + \int_0^1 ( \beta \tilde{\vec\nu}_h, \vec\psi_h )_h \abs{ \vec{x}_{h,u} } \dd u
    & = 0
    && \mbox{ for all } \vec\psi_h \in V_{h,0} \\
    \label{eq:semi-discrete-c}
    \int_0^1 \vec{\tau}_h \cdot \vec{x}_{h,u,t} q_h \dd u
    & = 0
    && \mbox{ for all } q_h \in Q_h.
  \end{align}
\end{subequations}
\end{linenomath*}

We immediately see that the length constraint is satisfied. Indeed,
testing (\ref{eq:semi-discrete-c}) with $q_h = \chi_{e_j} / h_j$ the characteristic function of $e_j$ weighted by the inverse of its length,for all $j = 1, \ldots, N-1$, we see that
\begin{linenomath*}
\begin{align}
  \abs{ \vec{x}_{h,u} }_t
  = \int_0^1 \vec\tau_h \cdot \vec{x}_{h,u,t} \chi_{e_j} / h_j \dd u
  = 0.
\end{align}
\end{linenomath*}
Hence, $\abs{\vec{x}_{h,u}}_t \equiv 0$ everywhere in space and time.

\begin{lemma}
  Assume that $I_2 = 1$ is constant in space and time, then any solution $(\vec{x}_h, \vec{y}_h, p_h)$ satisfies the stability estimate
  \begin{linenomath*}
  \begin{multline}
    \sup_{t \in [0,T]} \frac{e}{2} \int_0^1 \abs{ \vec{y}_h }_h^2 \abs{ \vec{x}_{h,u} } \dd u
    + \int_0^T \int_0^1 \abs{ \vec{x}_{h,t} }_h^2 \abs{ \vec{x}_{h,u} } \dd u \dd t \\
    \le \frac{e}{2} \int_0^1 \abs{ \vec{y}_{h,0} }^2 \abs{ \vec{x}_{h,u} } \dd u + c e \int_0^T \int_0^1 \abs{ (\beta \tilde{\vec{\nu}}_h )_t }_h^2 \abs{ \vec{x}_{h,u} } \dd u \dd t.
  \end{multline}
\end{linenomath*}
\end{lemma}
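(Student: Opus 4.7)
The plan is an energy argument that combines the three equations \eqref{eq:semi-discrete-a}--\eqref{eq:semi-discrete-c} with carefully chosen test functions so that the cross-coupling between $\vec{y}_h$ and $\vec{x}_{h,t}$ cancels exactly, leaving the natural Lyapunov functional $\mathcal{E}(t) := \frac{e}{2} \int_0^1 |\vec{y}_h|_h^2 |\vec{x}_{h,u}| \dd u$ satisfying a Gronwall-type inequality. The crucial preliminary observation, already obtained from \eqref{eq:semi-discrete-c}, is that $|\vec{x}_{h,u}|_t \equiv 0$; this forces $\vec{x}_{h,u} \cdot \vec{x}_{h,t,u} = 0$ on every element, so $\vec{x}_{h,t,u}$ is purely normal in the sense $\P_\nu^h \vec{x}_{h,t,u} = \vec{x}_{h,t,u}$, and moreover $\partial_t$ commutes with weighted integrals of the form $\int \cdot\, |\vec{x}_{h,u}| \dd u$.

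First, I would test \eqref{eq:semi-discrete-a} with $\vec{\phi}_h = \vec{x}_{h,t}$ and kill the pressure term $\int p_h \vec{\tau}_h \cdot \vec{x}_{h,t,u} \dd u$ by choosing $q_h = p_h$ in \eqref{eq:semi-discrete-c}. Using the preliminary observation and self-adjointness of $\P_\nu^h$ reduces the elastic coupling to $-e \int |\vec{x}_{h,u}|^{-1} \vec{y}_{h,u} \cdot \vec{x}_{h,t,u} \dd u$, yielding the discrete dissipation identity
\begin{linenomath*}
\begin{equation*}
\int_0^1 (K \P_\nu^h + \P_\tau^h)(\vec{x}_{h,t},\vec{x}_{h,t})_h |\vec{x}_{h,u}| \dd u
= e \int_0^1 \frac{\vec{y}_{h,u} \cdot \vec{x}_{h,t,u}}{|\vec{x}_{h,u}|} \dd u.
\end{equation*}
\end{linenomath*}
Next, I would differentiate \eqref{eq:semi-discrete-b} in time against an arbitrary time-independent $\vec{\psi}_h \in V_{h,0}$ (permissible because $|\vec{x}_{h,u}|$ is frozen), using $\partial_t \vec{\tau}_h = \vec{x}_{h,t,u}/|\vec{x}_{h,u}|$, and then substitute the time-dependent choice $\vec{\psi}_h = \vec{y}_h(t)$. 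The symmetry $(\vec{y}_{h,t}, \vec{y}_h)_h = \frac{1}{2}\partial_t |\vec{y}_h|_h^2$ produces an identity whose leading term is $\mathcal{E}'(t)$ and whose mixed coupling equals $e\int |\vec{x}_{h,u}|^{-1} \vec{x}_{h,t,u} \cdot \vec{y}_{h,u} \dd u$ with the opposite sign to the one in the previous identity.

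Adding the two identities cancels the coupling terms exactly and gives the energy identity
\begin{linenomath*}
\begin{equation*}
\mathcal{E}'(t) + \int_0^1 (K \P_\nu^h + \P_\tau^h)(\vec{x}_{h,t},\vec{x}_{h,t})_h |\vec{x}_{h,u}| \dd u
= -e \int_0^1 ((\beta \tilde{\vec\nu}_h)_t, \vec{y}_h)_h |\vec{x}_{h,u}| \dd u.
\end{equation*}
\end{linenomath*}
Coercivity of the mass-lumped quadratic form, $(K\P_\nu^h + \P_\tau^h)(\vec{\xi},\vec{\xi})_h \ge |\vec{\xi}|_h^2$ (which holds provided $K \ge 1$), bounds the second left-hand side term below by $\int |\vec{x}_{h,t}|_h^2 |\vec{x}_{h,u}| \dd u$. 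A weighted Young inequality separating $|(\beta \tilde{\vec\nu}_h)_t|_h^2$ from $|\vec{y}_h|_h^2$ on the right, Gronwall's inequality to absorb the $|\vec{y}_h|_h^2$ contribution, and finally taking the supremum over $t \in [0,T]$ and integrating once in time, deliver the stated bound; the constant $c$ absorbs the exponential factor from Gronwall.

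The principal obstacle I anticipate is the exact cross-coupling cancellation, which rests on the projector identity $\P_\nu^h \vec{x}_{h,t,u} = \vec{x}_{h,t,u}$ together with self-adjointness of $\P_\nu^h$; because $\vec{\tau}_h$ is piecewise constant and discontinuous across mesh nodes, the mass-lumped bilinear form $(K \P_\nu^h + \P_\tau^h)(\cdot,\cdot)_h$ must be interpreted through its elementwise contributions for the coercivity step to hold pointwise. A secondary technicality is legitimising the substitution of the time-dependent $\vec{y}_h(t)$ as test function in the time-differentiated form of \eqref{eq:semi-discrete-b}, which is valid because the differentiated identity holds pointwise in $t$ for every element of the fixed space $V_{h,0}$, and $\vec{y}_h(t) \in V_{h,0}$ at each instant.
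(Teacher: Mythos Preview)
Your proposal is correct and follows essentially the same route as the paper: test \eqref{eq:semi-discrete-a} with $\vec{x}_{h,t}$ and \eqref{eq:semi-discrete-c} with $\pm p_h$ to eliminate the pressure, differentiate \eqref{eq:semi-discrete-b} in time and test with $\vec{y}_h$, combine to cancel the cross term via $\P_\nu^h \vec{x}_{h,t,u} = \vec{x}_{h,t,u}$, then apply $K\ge 1$, Young and Gr\"onwall. Your write-up is in fact more explicit than the paper's about the projector identity and the legitimacy of the time-dependent test function, but the argument is the same.
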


\begin{proof}
  We start by testing \eqref{eq:semi-discrete-a} with $\vec\phi_h = \vec{x}_{h,t}$ and sum with the result of testing \eqref{eq:semi-discrete-c} with $q_h = -p_h$ to see
  \begin{linenomath*}
  \begin{align*}
    \int_0^1 ( K \P_\nu + \P_\tau) ( \vec{x}_{h,t}, \vec{x}_{h,t} )_h \abs{ \vec{x}_{h,u} } \dd u
    - e \int_0^1 \frac{1}{\vec{x}_{h,u}} \P_\nu^h \vec{y}_{h,u} \vec{x}_{h,t,u} \dd u = 0.
  \end{align*}
\end{linenomath*}

  Next, we take a time derivative of \eqref{eq:semi-discrete-b} and then test with $\vec\psi_h = \vec{y}_h$, and apply the identity $\abs{\vec{x}_{h,u}}_t = 0$, to see
  \begin{linenomath*}
  \begin{multline*}
    \int_0^1 ( \vec{y}_h, \vec{y}_{h,t} )_h \abs{ \vec{x}_{h,u} } \dd u + \int_0^1 \frac{1}{\abs{ \vec{x}_{h,u} }} \P_\nu^h \vec{x}_{h,u,t} \cdot \vec{y}_{h,u} \dd u \\
    + \int_0^1 ( ( \beta \tilde{\vec{\nu}}_h )_t, \vec{y}_h )_h \abs{ \vec{x}_{h,u} } \dd u
    = 0.
  \end{multline*}
\end{linenomath*}
  Combining the two previous equations, along the with length constraint gives:
  \begin{linenomath*}
  \begin{multline*}
    \frac{e}{2} \frac{d}{dt} \int_0^1 \abs{ \vec{y}_h }_h^2 \abs{ \vec{x}_{h,u} } \dd u
    + \int_0^1 ( K \P_\nu^h + \P_\tau^h ) \abs{ \vec{x}_{h,t} }_h^2 \abs{ \vec{x}_{h,u} } \dd u \\
    = - e \int_0^1 ( ( \beta \tilde{\vec{\nu}}_h )_t, \vec{y}_h )_h \abs{ \vec{x}_{h,u} } \dd u.
  \end{multline*}
\end{linenomath*}
  The result then follows using a Young's inequality, the fact that $K \ge 1$ and a Gr\"{o}nwall inequality.
\end{proof}

\subsection{Time discretisation}

We use a semi-implicit time discretisation which results in a linear system of equations to solve at each time step.
This idea was first suggested in the context of geometric partial differential equations by \cite{Dzi90}.
The key idea is that geometric terms should be treated explicitly.
Throughout this section we will assume, for simplicity, that $\abs{\vec{x}_{0,h,u}} = 1$.

For simplicity, we set a fixed time step $\deltaT$ and partition the time interval by $0 = t^0 < t^1 < \ldots < t^M = T$ where $t^{k+1} - t^k = \deltaT$ for $k = 0,1,\ldots, M-1$.
We propose the following time discretisation:
Given $\vec{x}_{h,0} \in V_h$ and a muscle forcing $\vec{\beta}$, for $k = 0,1,\ldots, M$, find $\vec{x}_h^{k+1} \in V_h$, $\vec{y}_h^{k+1} \in V_{h,0}$ and $p_h^{k+1} \in Q_h$ such that
\begin{linenomath*}
\begin{subequations}
  \begin{align}
    \int_0^1 ( K \P_\nu^k + \P_\tau^k ) \left( \frac{\vec{x}^{k+1} - \vec{x}^{k}}{\deltaT}, \vec{\phi}_h \right)_h \abs{ \vec{x}_u^k } \dd u
    & \\
    \nonumber
    + \int_0^1 p_h^{k+1} \vec\tau_h^k \cdot \vec\phi_{h,u} \dd u
    - e \int_0^1 \frac{I_2}{\abs{\vec{x}_{h,u}^k }} \P_\nu^k \vec{y}_{h,u}^{k+1} \cdot \vec{\phi}_{h,u}
    & = 0
    && \mbox{ for all } \vec\phi_h \in V_h \\
    \int_0^1 \left( \vec{y}_h^{k+1}, \vec{\psi}_h \right)_h \abs{ \vec{x}_{h,u}^k } \dd u
    + \int_0^1 \frac{1}{\abs{\vec{x}_u^k}} \vec{x}_{h,u}^{k+1} \cdot \vec{\psi}_{h,u} \dd u
    & \\
    \nonumber
    + \int_0^1 ( \beta(t^{k+1}) \tilde{\vec\nu}_h^k, \vec\psi_h )_h \abs{ \vec{x}_{h,u}^k } \dd u
    & = 0
    && \mbox{ for all } \vec\psi_h \in V_{h,0} \\
    \label{eq:constraint-hk}
    \int_0^1 ( \vec{\tau}_h^k \cdot \vec{x}_{h,u}^{k+1} - 1 ) q_h \dd u
    & =  0
    && \mbox{ for all } q_h \in Q_h.
  \end{align}
\end{subequations}
\end{linenomath*}

In addition to the usual time discretisation, we have chosen to integrate the constraint equation forwards in time.
This gives us more control over the length element $\mu^k = \abs{\vec{x}_{h,u}^k}$ as shown in the following Lemma.

\begin{lemma}
  If there exists a solution such that $\abs{ \vec{\tau}_h^{k+1} - \vec{\tau}_h^k } < 2$, then
  \begin{linenomath*}
  \begin{align}
    1 \le \mu^{k+1}
    = \frac{ 1 }{ 1 - \frac{1}{2} \abs{ \vec{\tau}_h^{k+1} - \vec{\tau}_h^{k} }^2 }.
  \end{align}
\end{linenomath*}
\end{lemma}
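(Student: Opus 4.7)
The plan is to extract a pointwise (elementwise) identity from the discrete constraint equation \eqref{eq:constraint-hk} and then use that both $\vec{\tau}_h^k$ and $\vec{\tau}_h^{k+1}$ are unit vectors.

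First, I would exploit the choice of finite element spaces. On each element $e_j$, the vector $\vec{\tau}_h^k$ is constant (since $\vec{x}_h^k \in V_h$ is piecewise linear) and $\vec{x}_{h,u}^{k+1}$ is also constant. Hence the integrand $\vec{\tau}_h^k \cdot \vec{x}_{h,u}^{k+1} - 1$ lies in $Q_h$ itself, so testing \eqref{eq:constraint-hk} with $q_h = \chi_{e_j}/h_j$ yields the elementwise identity
\begin{equation*}
  \vec{\tau}_h^k \cdot \vec{x}_{h,u}^{k+1} = 1 \quad \text{on each } e_j.
\end{equation*}

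Next, I would use the definition $\vec{\tau}_h^{k+1} = \vec{x}_{h,u}^{k+1}/\mu^{k+1}$ to rewrite this as
\begin{equation*}
  \mu^{k+1} \, \vec{\tau}_h^k \cdot \vec{\tau}_h^{k+1} = 1,
\end{equation*}
and then apply the polarisation identity for unit vectors,
\begin{equation*}
  \vec{\tau}_h^{k+1} \cdot \vec{\tau}_h^{k} = 1 - \tfrac{1}{2}\abs{\vec{\tau}_h^{k+1} - \vec{\tau}_h^k}^2,
\end{equation*}
which holds because $\abs{\vec{\tau}_h^k} = \abs{\vec{\tau}_h^{k+1}} = 1$. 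Combining gives the claimed formula
\begin{equation*}
  \mu^{k+1} = \frac{1}{1 - \tfrac{1}{2}\abs{\vec{\tau}_h^{k+1} - \vec{\tau}_h^k}^2}.
\end{equation*}
The hypothesis $\abs{\vec{\tau}_h^{k+1} - \vec{\tau}_h^k} < 2$ ensures the denominator is strictly positive (so the division makes sense and $\mu^{k+1}>0$), while the lower bound $\mu^{k+1} \ge 1$ follows immediately since the denominator is at most $1$.

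There is really no obstacle of substance here: the lemma is essentially a bookkeeping consequence of the forward-integrated constraint combined with the piecewise linear/piecewise constant structure of $V_h$ and $Q_h$. The only subtle point worth checking is the justification that the integrand lies in $Q_h$ so that the weak constraint becomes a strong elementwise identity; this is where the choice of integrating the constraint forwards in time (rather than its time-differentiated form) pays off, as it directly pins down the geometric quantity $\mu^{k+1}$ via a simple algebraic relation rather than through a Gronwall-type argument.
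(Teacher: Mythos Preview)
Your argument is correct and essentially identical to the paper's: both test \eqref{eq:constraint-hk} with a characteristic function to obtain the elementwise identity $\vec{\tau}_h^k\cdot\vec{x}_{h,u}^{k+1}=1$, then use that $\vec{\tau}_h^k,\vec{\tau}_h^{k+1}$ are unit vectors (via $1-\vec{\tau}_h^{k+1}\cdot\vec{\tau}_h^k=\tfrac{1}{2}\abs{\vec{\tau}_h^{k+1}-\vec{\tau}_h^k}^2$) to rearrange into the stated formula. The only cosmetic difference is that the paper writes the intermediate step as $\mu^{k+1}=\tfrac{1}{2}\mu^{k+1}\abs{\vec{\tau}_h^{k+1}-\vec{\tau}_h^k}^2+1$ and reads off $\mu^{k+1}\ge 1$ from nonnegativity of the first term before dividing, whereas you divide first and then bound the denominator.
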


\begin{proof}
Equation \eqref{eq:constraint-hk} with $q_h = \chi_{e_j}$, the characteristic function of $e_j$, gives the element-wise identity
\begin{linenomath*}
\begin{align*}
  \vec{\tau}_h^k \cdot \vec{x}_{h,u}^{k+1} = 1
\end{align*}
\end{linenomath*}
Then using the definition of $\mu^{k+1}$, we have
\begin{linenomath*}
\begin{align*}
  \mu^{k+1} & = \vec{\tau}_h^{k+1} \cdot \vec{x}_{h,u}^{k+1}
  = ( \vec{\tau}_h^{k+1} - \vec{\tau}_h^{k} ) \cdot \vec{x}_{h,u}^{k+1} + 1 \\
  & = \mu^{k+1} ( 1 - \vec{\tau}_h^{k+1} \cdot \vec{\tau}_h^k ) + 1
  = \frac{1}{2} \mu^{k+1} \abs{ \vec{\tau}_h^{k+1} - \vec{\tau}_h^{k} }^2 + 1.
\end{align*}
\end{linenomath*}
Since $\frac{1}{2} \mu^{k+1} \abs{ \vec{\tau}_h^{k+1} - \vec{\tau}_h^{k} }^2 \ge 0$, we have $\mu^{k+1} \ge 1$.
Furthermore if $\abs{ \vec{\tau}_h^{k+1} - \vec{\tau}_h^{k} }^2 < 2$, this equation can be rearranged to see the desired result.
\end{proof}

\begin{remark}
  An alternative scheme could replace \eqref{eq:constraint-hk} with
  \begin{linenomath*}
  \begin{align}
    \label{eq:constraint-hk'} \tag{\ref{eq:constraint-hk}'}
    \int_0^1 q_h \vec{\tau}_h^k \cdot \left( \vec{x}_{h,u}^{k+1} - \vec{x}_{h,u}^{k} \right) \dd u & = 0
  && \mbox{ for all } q_h \in Q_h.
  \end{align}
\end{linenomath*}
  In this case, similar calculations would give that
  \begin{linenomath*}
  \begin{align*}
    \mu^{k+1} = \frac{ \mu^k }{ 1 - \frac{1}{2} \abs{ \vec{\tau}_h^{k+1} - \vec{\tau}_h^{k} }^2 }.
  \end{align*}
  \end{linenomath*}
  This would imply that $( \mu^{k} )$ is increasing and this is strict unless subsequent solutions are equal. In general, using \eqref{eq:constraint-hk'}
results in an increasing error in the constraint equation and potentially unstable solution. However, the error will be small in the special (trivial) case with no muscle forcing $\beta = 0$.
\end{remark}

\subsection{Matrix formulation}

The finite element formulation allows the above analytical results, however the precise implementation of the method can be more clearly seen when written in matrix form.
First, we introduce a nodal basis $\{ \vec{e}_d \phi_j\}_{d=1,2, j=0,N+1}$ of $V_h$ given by $\phi_j( u_i ) = \delta_{ij}$ (the standard hat functions) and $\{ \vec{e}_d \}$ the standard unit basis of $\R^2$.

We decompose $\vec{x}_h^k, \vec{y}_h^k$ and $p_h^k$ by
\begin{linenomath*}
\begin{align*}
  \vec{x}^k_h( u ) = \sum_{j=0}^{N} \vec{x}^k_j \phi_j(u), \quad
  \vec{y}^k_h( u ) = \sum_{j=1}^{N-1} \vec{y}^k_j \phi_j(u), \quad
  p_h^k( u ) = \sum_{j=0}^{N-1} p^k_j \chi_{e_j}( u ).
\end{align*}
\end{linenomath*}
To write down the scheme as a system of linear equations of the nodal values $\{ \vec{x}_j^k \}, \{ \vec{y}_j^k \}$ and $\{ p_j^k \}$,
we introduce the element wise variables:
\begin{linenomath*}
\begin{align*}
  q^k_i = \abs{ \vec{x}^k_{i+1} - \vec{x}^k_{i} }, &\qquad
  \vec\tau^k_i = \vec{\tau}_h^k( e_i ) \\
  \P^k_i \vec\xi = \vec\xi - ( \vec\tau^k_i \cdot \vec\xi ) \vec\tau^k_i, &\qquad
  \mathbb{D}^k_i \vec\xi = K \vec\xi + ( 1 - K ) ( \vec\tau^k_i \cdot \vec\xi ) \vec\tau^k_i,
\end{align*}
\end{linenomath*}
and further denote by $u_{j \pm 1/2} = ( u_j + u_{j\pm1} ) / 2$. Then given a forcing function $\beta$ and a previous solution $\{ \vec{x}_i^k \}$, we compute $\{ q^k_i \}, \{ \vec\tau^k_i \}, \{ \P^k_i \}$ and $\mathbb{D}^k_i$ and find $\{ \vec{x}_i^{k+1} \}, \{ \vec{y}_i^{k+1} \}$ and $\{ p_i^{k+1} \}$ as the solution of the linear system:
\begin{linenomath*}
\begin{align*}
  \left( q_0^k \mathbb{D}^k_{0} \right) \frac{ \vec{x}_0^{k+1} - \vec{x}_0^{k} }{ \Delta t }
  + \left( -\vec\tau^k_{0} p_0^{k+1} \right)
  + e \left( I_2( u_{1/2} ) \P^k_{0} \frac{ \vec{y}_{1}^{k+1} - \vec{y}_{0}^{k+1} }{q_0^k} \right)
  = 0 \qquad \\
  \left( q_{i-1}^k \mathbb{D}^k_{i-1} + q_i^k \mathbb{D}^k_{i} \right) \frac{ \vec{x}_i^{k+1} - \vec{x}_i^{k} }{ \Delta t }
  + \left( \vec\tau^k_{i-1} p_{i-1}^{k+1} - \vec\tau^k_{i} p_i^{k+1} \right) \qquad\qquad\qquad\qquad \\
  + e \left( I_2( u_{i-1/2} ) \P^k_{i-1} \frac{ \vec{y}_{i-1}^{k+1} - \vec{y}_{i}^{k+1} }{q_{i-1}^k}
  + I_2( u_{i+1/2} ) \P^k_{i} \frac{ \vec{y}_{i+1}^{k+1} - \vec{y}_{i}^{k+1} }{q_i^k} \right)
  = 0 \qquad \\
  \mbox{ for } i = 1, \ldots, N-1 \\
  \left( q_{N-1}^k \mathbb{D}^k_{N-1} \right) \frac{ \vec{x}_N^{k+1} - \vec{x}_N^{k} }{ \Delta t }
    + \left( \vec\tau^k_{N-1} p_{N-1}^{k+1} \right) \qquad\qquad\qquad\qquad \\
  + e \left( I_2( u_{N-1/2} ) \P^k_{N-1} \frac{ \vec{y}_{N-1}^{k+1} - \vec{y}_{N}^{k+1} }{q_{N-1}^k}
  \right)
  = 0 \qquad \\
  \vec{y}_0^{k+1} + \beta( u_0, t^{k+1} ) \tilde{\vec\nu}_0^k = 0 \qquad \\
  \left( q_{i-1}^k + q_i^k \right) \vec{y}_i^{k+1}
  + \left( \frac{\vec{x}_{i-1}^{k+1} - \vec{x}_i^{k+1}}{q_{i-1}^k} + \frac{\vec{x}_{i+1}^{k+1} - \vec{x}_i^{k+1}}{q_{i}^k} \right) \qquad\qquad\qquad\qquad \\
    + \left( q_{i-1}^k + q_i^k \right) \beta( u_i, t^k ) \tilde{\vec\nu}^k_i
    = 0 \qquad \\
  \mbox{ for } i = 1, \ldots, N-1 \\
  \vec{y}_N^{k+1} + \beta( u_N, t^{k+1} ) \tilde{\vec\nu}_N^k = 0 \qquad \\
  \frac{ \vec{x}_{i+1}^k - \vec{x}_i^k }{q_i^k} \cdot ( \vec{x}_{i+1}^{k+1} - \vec{x}_i^{k+1} ) - 1 = 0 \qquad \\
  \mbox{ for } i = 0, \ldots, N-1. \\
\end{align*}
\end{linenomath*}

The resulting matrix system is solved by a direct sparse solver. For the numerical experiments presented in this paper we use the software package \textsf{UMFPACK} \citep{umfpack}.

\section{Results}
\label{sec:simulation-results}

In what follows, we first demonstrate the numerical stability of the model,
before applying the model to address questions about \textit{C.\ elegans} 
material properties and its forward locomotion.
Here, we compute in nondimensional form but show results with respect to the
dimensioned variables. In particular $\bar\eta = 0$, i.e.  body damping due to
internal viscosity is neglected.
In all simulations our initial condition is taken to be proportional to arc-length and use $\deltaT = 10^{-3}$ and $128$ points along the body.

\subsection{Model validation}

\subsubsection*{Convergence tests demonstrate good error bounds for long-time
simulations}

To show the numerical properties of our scheme we take a test problem where $\beta$ is given by a travelling sine-wave:
\begin{linenomath*}
\begin{align*}
  \beta( u, t ) = \pi \sin( 1.5 \pi u - 2 \pi t ).
\end{align*}
\end{linenomath*}
For the purpose of validating the numerical scheme, we restrict ourselves to the parameters $K = e = I_2 = 1$.
We take the initial condition to be a straight line: $\vec{x}_0(u) = (u,0)$ and run until $T = 10$.
We refer to the implementation using \eqref{eq:constraint-hk} to be the measure constraint and \eqref{eq:constraint-hk'} to be the velocity constraint.
We simulate our model with $\deltaT = 10^{-1}, 10^{-2}, 10^{-3}, 10^{-4}$ and $h = 2^{-16}, 2^{-32}, 2^{-64}, 2^{-128}$ for both forms of the constraint equation.

We track over time the energy $\mathcal{E}( \vec{x}_h, \vec{y}_h )$ given by
\begin{linenomath*}
\begin{align*}
  \energyError^k = \energyError( \vec{x}_h^k, \vec{y}_h^k ) = \int_0^1 \abs{ \vec{y}_h^k }_h^2 \abs{ \vec{x}_{h,u}^k } \dd u,
\end{align*}
\end{linenomath*}
(see \eqref{eq:mass-lumping} for the definition of lumped absolute value $\abs{\cdot}_h$)
and the error in the constraint equation:
\begin{linenomath*}
\begin{align*}
  \constraintError^k = \max_{(0,1)} \abs{ 1 - \mu^k }.
\end{align*}
\end{linenomath*}
Results are shown in Tables~\ref{tab:conv} and Figure~\ref{fig:conv}.
We first see that the convergence in the time step $\Delta t$ is much slower than in the spatial step $h$. In particular, we see large errors in the constraint equation for the velocity constraint approach. This is consistent with our analysis.
By choosing to apply the constraint to the measure instead of the velocity we see that not only is the constraint more accurately enforced but the error in the shape of the worm's body is also lower. This implies long simulations can be run stably and that the errors in body shape and thrust remain bounded.

In simulations starting from a straight line configuration, there is a short time where the body readjusts towards the usual periodic gait. In this time, the body is often rotated. However, after this initial transient we do not see any further rotations.
As a result in all subsequent calculations we will adjust the coordinate system to remove this transient rotation.

\begin{table}[tbh]
  \centering
  \begin{tabular}{c|cc|cc}
  & \multicolumn{2}{c|}{Measure constraint} & \multicolumn{2}{|c}{Velocity constraint} \\
  $\Delta t$ & $\energyError^M$ & $\constraintError^M$ & $\energyError^M$ & $\constraintError^M$ \\
  \hline
    $10^{-1}$ & $8.143 \phantom{\cdot 10^{-0}}$ & $2.981 \cdot 10^{-1}$ & $1.068 \cdot 10^{2}$ & $4.746 \phantom{\cdot 10^{-0}}$ \\
    $10^{-2}$ & $6.434 \cdot 10^{-1}$ & $1.395 \cdot 10^{-3}$ & $5.728 \cdot 10^{1}$ & $1.679 \phantom{\cdot 10^{-0}}$ \\
    $10^{-3}$ & $4.119 \cdot 10^{-1}$ & $1.425 \cdot 10^{-5}$ & $5.782 \cdot 10^{-1}$ & $8.385 \cdot 10^{-2}$ \\
    $10^{-4}$ & $3.969 \cdot 10^{-1}$ & $1.427 \cdot 10^{-7}$ & $4.097 \cdot 10^{-1}$ & $7.681 \cdot 10^{-3}$ \\
\end{tabular}

  \vspace{0.25cm}

  \begin{tabular}{c|cc|cc}
  & \multicolumn{2}{c|}{Measure constraint} & \multicolumn{2}{|c}{Velocity constraint} \\
  $h$ & $\energyError^M$ & $\constraintError^M$ & $\energyError^M$ & $\constraintError^M$ \\
  \hline
    $2^{-4}$ & $3.250 \phantom{\cdot 10^{-0}}$ & $1.164 \cdot 10^{-7}$ & $3.352 \phantom{\cdot 10^{-0}}$ & $5.299 \cdot 10^{-3}$ \\
    $2^{-5}$ & $1.063 \phantom{\cdot 10^{-0}}$ & $1.220 \cdot 10^{-7}$ & $1.100 \phantom{\cdot 10^{-0}}$ & $6.469 \cdot 10^{-3}$ \\
    $2^{-6}$ & $5.473 \cdot 10^{-1}$ & $1.278 \cdot 10^{-7}$ & $5.662 \cdot 10^{-1}$ & $6.679 \cdot 10^{-3}$ \\
    $2^{-7}$ & $3.969 \cdot 10^{-1}$ & $1.427 \cdot 10^{-7}$ & $4.097 \cdot 10^{-1}$ & $7.681 \cdot 10^{-3}$ \\
\end{tabular}

  \caption{Convergence results for our test problem. Errors are given at the final time step.
    The first table varies $\deltaT$ whilst keeping $h = 2^{-7}$ fixed.
    The second table varies $h$ whilst keeping $\deltaT = 10^{-4}$ fixed.}
  \label{tab:conv}
\end{table}

\begin{figure}[tbh]
  \centering
  \includegraphics{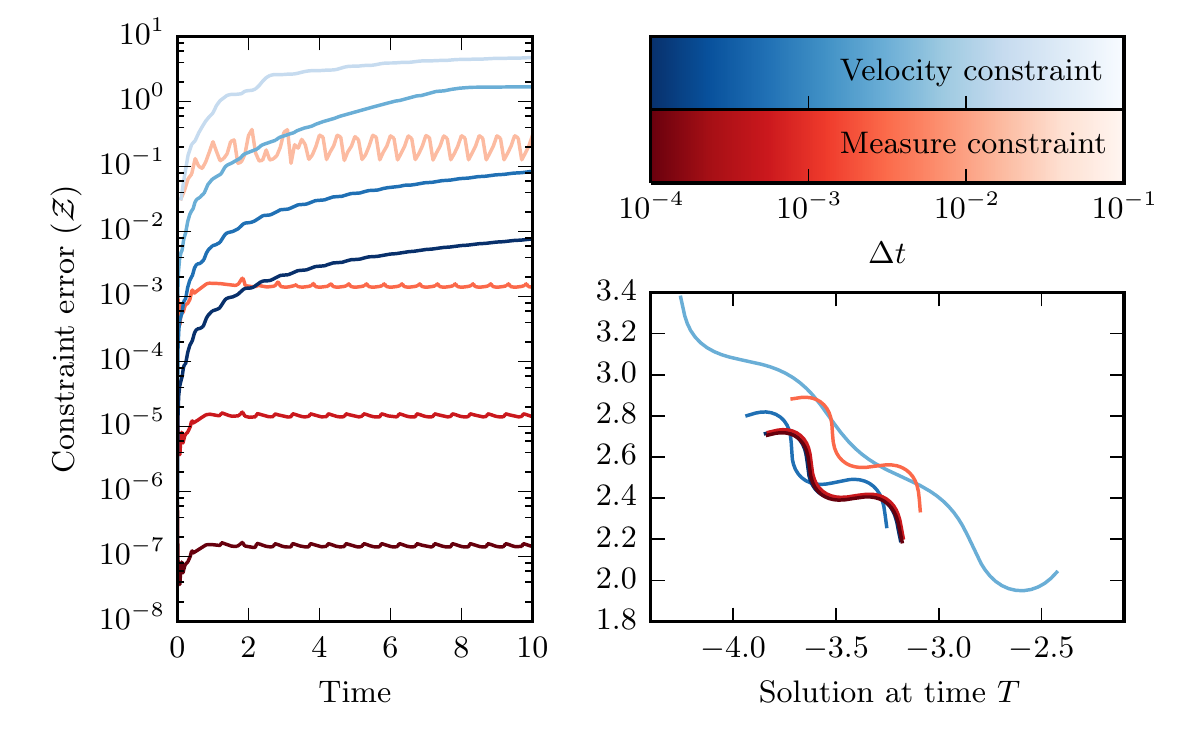}
  \caption{Convergence results in $\Delta T$ for a test problem. Left shows the error in the constraint over time $(\constraintError^k)$ and right shows the solution $\vec{x}_h^M$ at the final time. The colours are indicated by the colour bar at the top right. Red plots correspond to the measure constraint and blue plots correspond to the velocity constraint.}
  \label{fig:conv}
\end{figure}

\subsubsection*{The model replicates expected forward locomotion behaviours
over a wide range of conditions and parameter values}

To explore the dependence of the model on parameters, we ask, first, whether
model implementation with default parameters from the literature
(Table~\ref{tab:parameters}) can replicate reported locomotion behaviours.
To simulate forward locomotion, we set the muscle activation function
$\beta$ to be given in dimensional form by a travelling sine-wave with a
decaying amplitude
\begin{linenomath*}
\begin{equation}
  \label{eq:travelingwave}
  \begin{aligned}
  \beta( u, t ) & = \beta_0(u) \sin( 2 \pi u L / \lambda - 2 \pi \omega t ), \\
  \beta_0(u) & = \big( 10 (1-u) + 6 u \big) \unit{mm}^{-1}, \, \lambda = 0.65 \unit{mm}, \, \omega = 0.30 \unit{s}^{-1}.
\end{aligned}
\end{equation}
\end{linenomath*}
where $u$ follows a material point along the worm (varying from 0 at the head
to 1 in the tail).
The choice of $\beta_0$ reflects the  maximum curvatures exhibited by a worm during forwards locomotion, see \citet[Fig. 3, agar]{BoyBerCoh12}, and typical wavelength from during a crawling gate \citep{BerBoyTas09,FanWyaXie10}.

\begin{figure}[htb]
  \centering
  \includegraphics{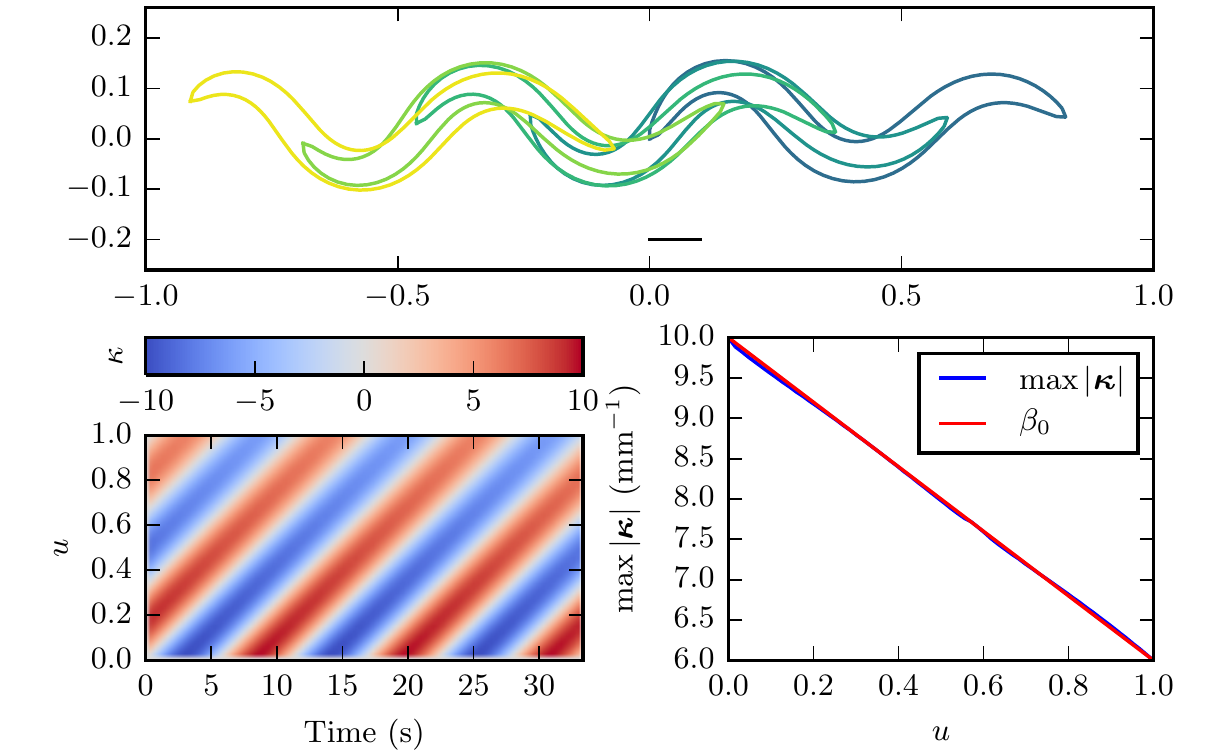}
  \caption{Simulation results for forced forward locomotion. The top panel shows simulated body postures in time, in response to crawling like muscle forcing at times $0, 5, 10, 15 \unit{s}$ (from blue to yellow). The initial condition is fixed to remove any transient behaviour.
    The axis are in mm with a $0.1 \unit{mm}$ scale bar shown.
    The bottom left panel shows the travelling waves of curvature along the body. The bottom right panel shows the maximum curvature at each point along the body (after the initial transient), as compared to the muscle forcing function. Recall that $u = 0$ denotes the head and $u = 1$ the end of the tail.}
  \label{fig:typical}
\end{figure}

Figure~\ref{fig:typical} features both typical body postures and a plot showing
the curvature along the body for agar like external fluid forces. The postures
show a near-sinosoidal configuration with curvature travelling from head to
tail down the body.  We see that the maximum curvature along the body follows
the desired curvature imposed by the muscle forcing. In other words, as
expected for agar like conditions and sufficiently high body elasticity, the
passive body properties enable the body curvature to closely follow the muscle
activation.  Note in particular, the negligible transient time before stable
undulations are observed, suggesting that the model would perform robustly in
dynamics environments or under various modulations of internal or external
forcing. Thus the model is able to replicate experimentally observed locomotion
body postures for realistic choices of parameter values.  

Next to validate the model of the linear viscoelastic environment, we test
whether the thrust generated by the worm during forward locomotion agrees with
reported values on agar. Simulations under agar like
conditions (with a ratio of drag coefficients $K=39$) generate an absolute
speed of $0.16 \unit{mm}/\unit{s}$ which compares well with previously observed speeds ($0.137 \pm 0.014 \unit{mm}/\unit{s}$, \citealt{Yemini2013}).

\subsubsection*{The model performs well over a wide range parameters and
body shapes}

\begin{figure}[htb]
  \centering
  \includegraphics{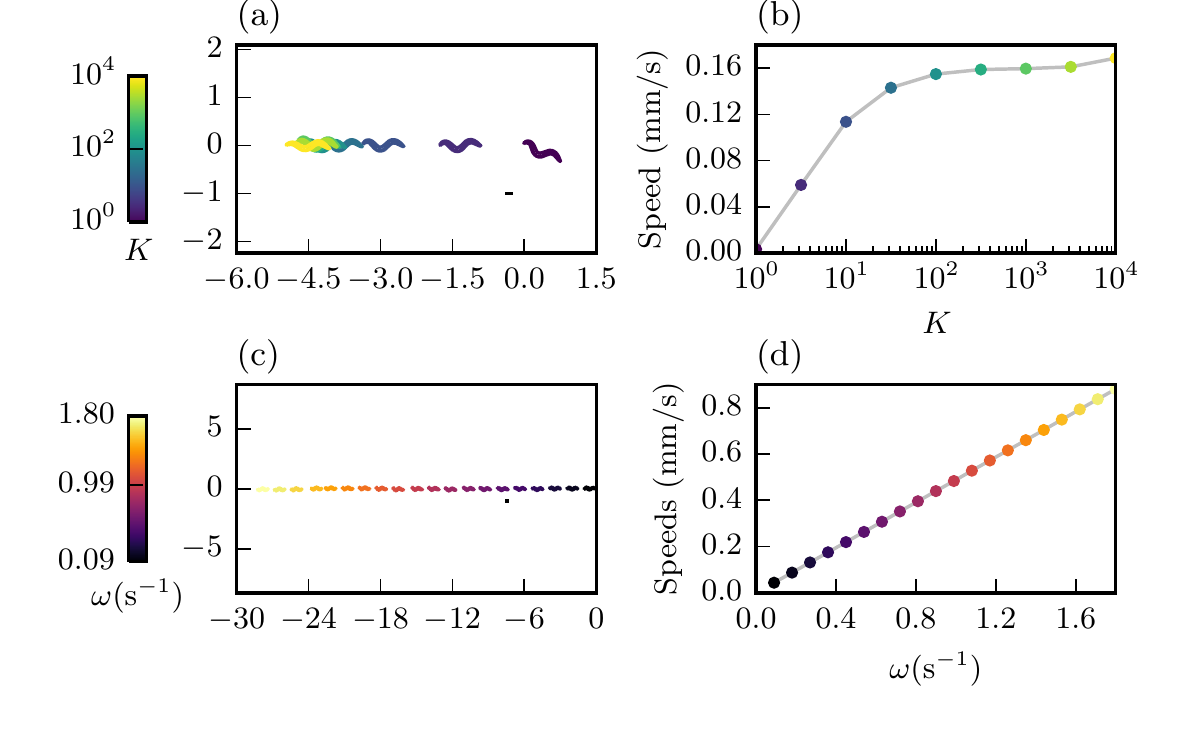}
  \caption{Modulation of the locomotion by internal and external parameters:
  $K$ (a,b), $\omega$ (c,d).
  The left column of plots (a,c) show the final shape after $28.3 \unit{s}$ starting with the head at $(0,0)$.
% $\vec{x}_0(u) = (0,u) \unit{mm}$.
  All axes are in $\unit{mm}$ with a $0.1 \unit{mm}$ scale bar in the bottom right of each plot.
  The right column of plots (b,d) show the average speed during the trial (ignoring undulations).}
  \label{fig:varied}
\end{figure}

\begin{figure}[htb]
  \centering
  \includegraphics{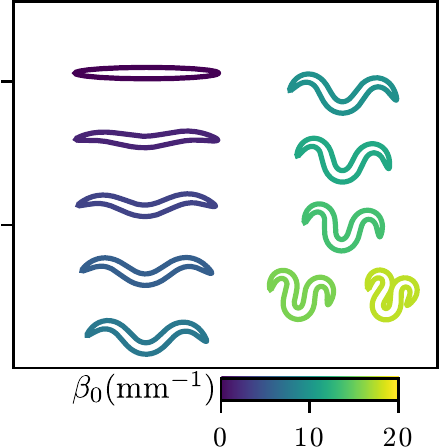}
  \caption{Simulation results for different strengths of muscle forcing 
$\beta_0$. All plots show the final shape after $33.3 \unit{s}$ starting 
from a straight line configuration, and subject to constant forcing $\beta$.  
The shapes have been moved on the plot to demonstrate the different postures.}
  \label{fig:varied-beta}
\end{figure}

\textit{C.\ elegans} can independently modulate its undulation frequency and
waveform, within some limits. Furthermore, gait modulation (correlated across
frequency and waveform) is observed as a function of the viscosity (or
viscoelasticity) of the fluid environment. In our model, we are able to vary
these properties independently in order to test the range of behaviours the
model can generate.
% Worm undulation periods vary from approximately 0.5 second (rapid, long-wavelength thrashing or swimming in liquid) up to 3-4 seconds (slow, short wavelength crawling on agar), with slower undulations in more resistive environments \citep{BerBoyTas09,FanWyaXie10}.
Figure~\ref{fig:varied}(c,d)
shows worms simulated with the same forcing function, but variable undulation frequencies from $0.1 \unit{s}^{-1}$ to $1.8 \unit{s}^{-1}$. This covers the range of parameters from Table~\ref{tab:parameters}.
The model copes well with the full
range of undulation frequencies, with the body curvature following the forcing for all.
In other words, different values of $\omega$ lead to different resulting speeds but with similar body postures.
By contrast,
changing the wavelength of undulations $\lambda$, leads to a range of different 
postures as well as speeds (see Figure~\ref{fig:varied-lambda}(a)).
As expected, for a given waveform, for low frequencies the faster the 
undulation, the faster the speed.
Specifically, for low values of $\omega$, we find a linear relationship between undulation frequency and forward speed, indicating that the undulation frequency merely scales the effective time of the simulation.

Next we use the model to validate behaviours across a range of
external viscoelasticities by varying the ratio of drag coefficients $K$, while keeping
the undulation waveform and frequency fixed.
% For sufficiently shallow waveforms, \citet{GraLis64} showed that this ratio dictates the forward progress per undulation (or conversely, the percentage `slip' of the wavelength relative to the surface or medium).  Thus $K=1$ corresponds to an isotropic environment, where no thrust is generated, Newtonian media with $K\approx 1.5$ are characterised by high `slip', as compared to agar, and the limit $K\rightarrow\infty$ corresponds to a solid channel (with no slip).
We simulated forward locomotion for $K$ ranging from 1 to 10,000. Specifically, we vary $K_{\vec\nu}$ whilst keeping $K_{\vec\tau}$ fixed. Thus an increase in $K$ corresponds to a stiffer medium.
We see a lack of thrust for $K=1$, and maximum thrust for the largest values of $K$. We
further find that for $K\approx 100$, the distance travelled saturates, thus
increasing $K$ has a negligible effect on speed. Note
for this choice of $E$, varying the ratio of environmental parameters $K \lesssim 100$ does
not significantly affect the body postures in our simulations in this parameter range.
This corresponds to the theory of \citet{GraLis64} who that for sufficiently small undulations that the ratio $K$ dictates the forward progress per undulation (or conversely the percentage `slip' relative to the background media). For $K=1$, we see total slip and for $K \sim 100$, we see minimal slip.
For $K \gtrsim 1000$, the environment is too stiff for the worms muscles to overcome and less thrust is produced.
Note that we expect the worm not to be able to locomote in $K \approx 1000$, which suggests that our value $E = 10 \unit{MPa}$ is too high (see below).

\begin{comment} This is expected since we do not modulate the muscle 
forcing parameters with $K$. One would expect an integrated model to show this variation.
We see that for very low values of $K$ there is very little distance travelled and for higher values, $K \gtrsim 100$, the distance travelled saturates.
\end{comment}

Finally, we probe the flexibility of our method in generating arbitrary body
shapes.  To do this, we simulate our model with similar muscle forcing function
(\ref{eq:travelingwave}) but setting the amplitude of preferred undulations
$\beta_0(u) = \beta_0$ to constant values from $0$-$20 \unit{mm}^{-1}$. The
results are shown in Figure~\ref{fig:varied-beta}.  We see that as we increase
$\beta_0$ the posture is increasingly curved.  Note that for the highest value
of $\beta_0$, the body of the worm self intersects which is not precluded from
our assumptions on the fluid dynamics.  This shows the highest curvatures that
the model can capture.
% The highest $\beta_0$ corresponds to a muscle contraction of 40\%, which is well beyond that of physiological muscles.
The ability to arbitrarily and reliably modulate the driving force suggests a powerful tool
to study a variety of behaviours (for example, omega turns) as well as
biophysical properties of muscle and cuticle over development in wild type
animals, as well in animal models of disease.

\subsection{Insight on nematode locomotion}
\subsubsection*{The model places constraints on the range of viable material properties of the worm}

\begin{figure}[tbh]
  \centering
  \includegraphics{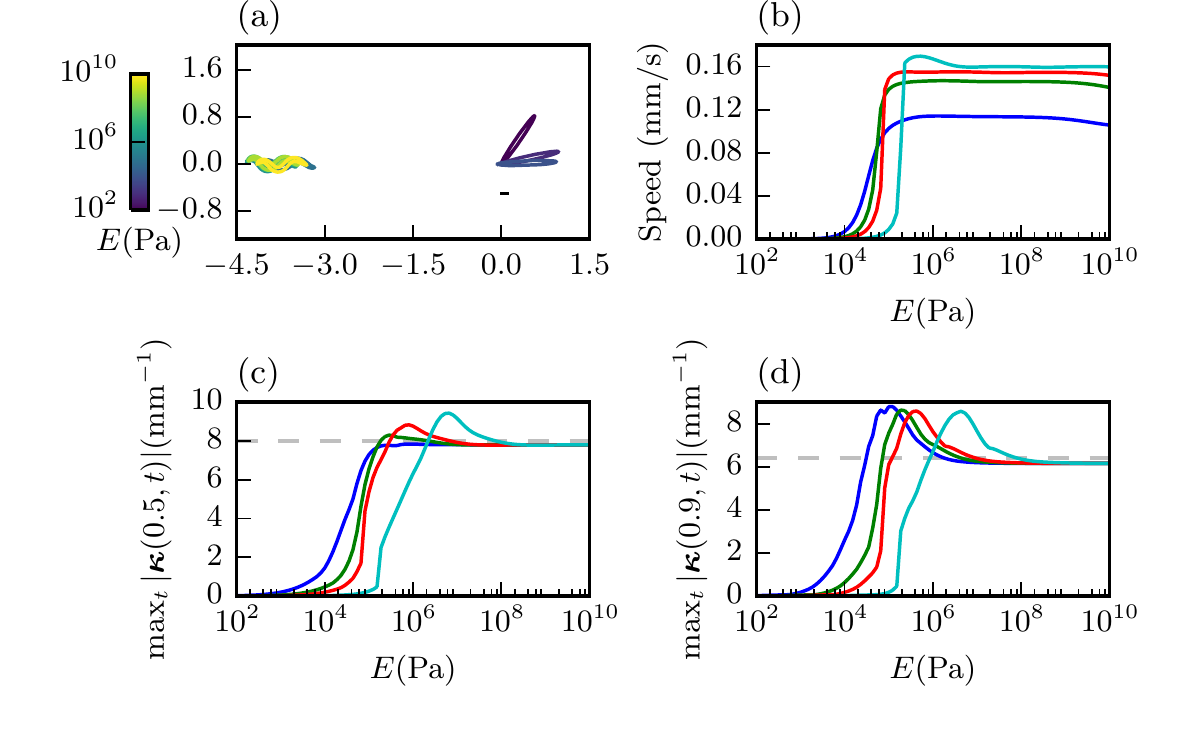}
  \caption{The effect of Young's modulus ($E$) on locomotion.
    (a) The final body shape after $28.3 \unit{s}$ starting with the head at $(0,0)$. Only $E$ varies. All other parameters fixed.
    (b) The average speed during the trial (ignoring undulations) for varying $E$ for different values of $K$ (Blue $K=10$, green $K=39$, red $K=100$, cyan $K=1000$).
    % (c-f) Maximum curvature achieved along the body for different values of $K$. The colour scheme is the same as (a).
    (c,d) Maximum curvature achieved at the centre of the body ($u=0.5$) and near the tail ($u=0.9$) for different values of $K$ (the same colours as (b)). The dashed grey line represents $\beta_0(u)$.}
  \label{fig:varied-E}
\end{figure}

A key question is whether the key free parameter of the model, the Young's
modulus $E$, may be constrained by the model.  For the above simulations, we
chose a Young's modulus of $E=10 \unit{MPa}$.
%While well within the range of
%experimentally reported values, we were surprised that the worm was
%sufficiently elastic to successfully undulate even at unrealistically high
%values of $K$.
However the range of reported values ranges over five orders of magnitude.
We therefore asked how the ability of muscle forcing to impose
body curvature depends on internal elasticity.  To answer this question, we
repeated the above simulations varying $E$ from $1 \unit{kPa}$ to $10 \unit{GPa}$ with $K$ from $10$ to $1000$ (see Figure~\ref{fig:varied-E}).

For low values of the Young's modulus, $E$, the worm is not sufficiently stiff to overcome the external environmental stiffness which results in a large error between the curvature of the worm and the preferred curvature. The lack of generated curvature results in almost no forwards thrust.
In contrast, for high values of $E$ the desired curvature is exactly achieved which leads to forwards locomotion.
We note that for intermediate values of $E$, the curvature is higher than the muscle activation profile ($\beta_0$), which leads to even higher speeds.

The values in Table~\ref{tab:parameters} list $K = 39$ for agar. For $K = 10$-$39$, we require that  the worm's
internal elasticity suffices to generate high thrust and high curvatures. For these values, we see the transition from low to high thrust and high to low curvature difference around $E \approx 10 \unit{kPa}$ (Figures 6(b,c,d) green line).
We explore $K = 1000$, which corresponds to $(K_{\vec\nu}, K_{\vec\tau}) = (3200, 3.2) \cdot \unit{kg} \unit{m}^{-1} \unit{s}^{-1}$ and much larger drag than $10 \unit{Pa} \unit{s}$, thus we expect to see low locomotion speeds here and for the worm to struggle to generate curvature.
For $K = 1000$, we see the transition from low to high thrust and high to low curvature error around $E \approx 1 \unit{MPa}$ (Figures~\ref{fig:varied-E}(b,c,d) cyan line). Combining these conservative bounds we give an estimate of $10 \unit{kPa} \lesssim E \lesssim 1 \unit{MPa}$.

\begin{comment}
The value we use for $E$ ($10 \unit{MPa}$) is within the upper range which is less sensitive to parameter variations in our current setting of high external fluid viscosity.
%
The sensitivity of the worm's locomotion to the parameter choices is shown in Figure~\ref{fig:varied}. 
%
For our choice of $K$ and $E$, scaling the frequency of undulations, $\omega$, simply rescales the effective time. Different values of $\omega$ lead to different resulting speeds but with similar body postures.

{\color{green}
%\notes{as you make a big deal about ranges of speed and realistic speeds, it makes sense to give some values here. What does the speed saturate to for large $K$ for example and below for different postures?}
%\notes{TR: I don't think I am making that big a deal of this - only that it is possible to say something. here the speed is $0.05 \unit{mm}/\unit{s}$ which is probably too slow but relates to a beat frequency of $0.3 \unit{s}^{-1}$. This is the value stated by Fang-Yang but I would say this is perhaps too slow also. recall speed scales with beat frequency.\ldots}
}
\end{comment}

\subsubsection*{The model suggests that the worm modulates its waveform to maintain speed}

\begin{figure}[tb]
  \centering
  \includegraphics{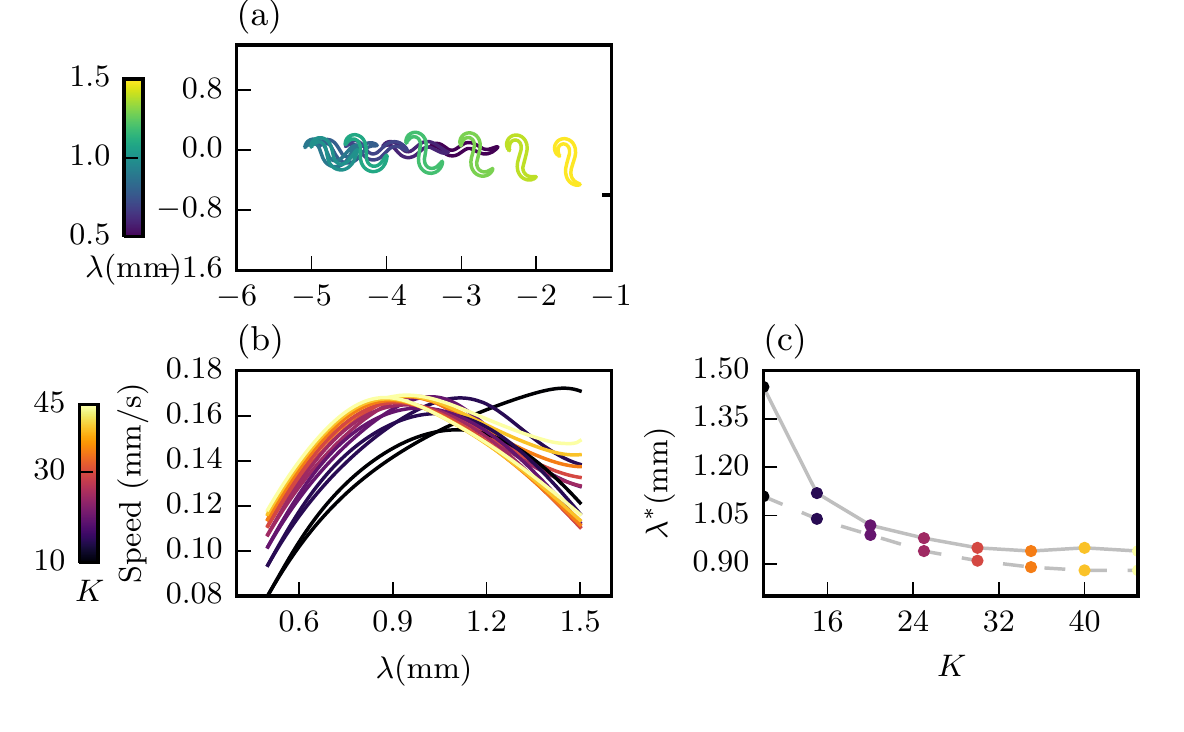}
  \caption{The undulation wavelength modulates locomotion speed.
    (a) The final body shape after $28.3 \unit{s}$ starting with the head at $(0,0)$. Only $\lambda$ varies. All other parameters fixed.
    % (b) Maximum curvature achieved along the body. The colour scheme is the same as (a).
    (b) The average speed during the trial (ignoring undulations) for varying $\lambda$ for different values of $K$.
    (c) The optimal choice of $\lambda = \lambda^*$ for each $K$. The solid line represents $E = 10 \unit{MPa}$, the dashed line represents $E = 1 \unit{MPa}$.
}
  \label{fig:varied-lambda}
\end{figure}

Finally we asked how wavelength modulation affects locomotion speed in different viscoelastic media.
Changing the wavelength of undulations $\lambda$ (Figure~\ref{fig:varied-lambda}) shows a wide range of body postures and swimming speeds.
\citet{BerBoyTas09} reported a smooth decrease of the wavelength, amplitude and frequency of undulations as the worm locomotes through increasingly resistive media,
corresponding to increasing $K$.
Similar forms of smooth gait modulation spanning the swim-crawl transition have also been observed for Newtonian and near-Newtonian media \citep{FanWyaXie10,lauga,Lebois2012}.
We find that different choices of $K$ correspond to different optimal wavelengths. For agar like environments, our model yields a maximum speed for a wavelength of $\lambda_{\mathrm{opt}} (K = 39) = 0.9 \unit{mm}$.
The optimal wavelength (higher than the wavelength reported under similar conditions) appears relatively insensitive to the choice of $E$ (at $1 \unit{MPa}$ and up), but depends more strongly on the choice of waveform.
As $K$ is reduced, we find a that the optimal wavelength increases
significantly to $\lambda_{opt} (K = 10) = 1.5 \unit{mm}$, commensurate with a swimming wavelength
(Table~\ref{tab:parameters}). As expected, in these low resistance environments, the optimal
wavelength depends strongly on the material properties of the worm.
The qualitative correspondence of the above results with the observed waveform modulation
in the swim-crawl transition suggests that the worm must modulate its undulation wavelength to maximise speed across a range of environments.
The simulations are repeated at $E = 1 \unit{MPa}$ to show qualitatively the same results, and quantitatively the similar results for sufficiently resistive environments, $K \ge 20$.

\section{Discussion}
\label{sec:conclusion}

\begin{comment}
{\color{magenta}
\noindent
$\bullet~~~$ 
Added: non-arclength. peristalsis.
\newline $\bullet~~~$ closing assumption. Is this a problem for
internal viscosity?
\newline $\bullet~~~$ 
To add: improvement in efficiency over articulated body models (boyle 2012)
and particle systems simulations (openworm). Something added...
\newline $\bullet~~~$ 
improvement in numerical stability over articulated body models
(where numerical instabilities occurred, e.g. for some combinations
of drag coeff.)
\newline $\bullet~~~$ 
Can you comment on what it would take to extend the numerical method 
to nonzero viscosity? By adding viscosity would a new closing
assumption be needed for the system not to be underdetermined?
\newline $\bullet~~$
selected results: added lower bound on $E$, what about upper bound (failure
of undulations--in reasonable time--for realistic values of $K$?
can you say something about relaxation times in different media?
\newline $\bullet~~~$
selected results: are both K and fluid viscosity modulations possible?
Here we chose to only show modulations of $K$. Is there a problem
with just increasing viscosity and keeping $K$ value Newtonian (1.5)?
%
added parameter ranges and how do they compare to experimental findings?
}\\
\end{comment}

In all animals, the shape of the organism is determined by integrating the 
effects of internal driving forces (muscle actuation) and the interaction of 
the passive body with its external environment. The combined action of active 
and passive moments in determining body shape is particularly important in 
the study of active swimming and undulatory locomotion, where spatio-temporal 
feedback about the shape of the body can stabilise the motion, facilitate
adaptation and modulation of posture, and -- in \textit{C.\ elegans} -- 
even generate the rhythmic patterns of neuromuscular activity underlying 
locomotion. An integrated study of locomotion in any such system requires a
flexible, stable and efficient model of the biomechanics to interface
to model nervous systems. To date, no such model exists.

To address this need, we have detailed a biomechanical model for 
undulatory locomotion of slender bodies, applied to a case study of 
\textit{C.\ elegans}.  The model
incorporates the collective passive viscoelasticity of the tissue, an active
moment, and drag forces capturing the interaction of the organism with the
environment at low Reynolds numbers.  In this model, the body of the worm is
represented by a tapered elongated shell, driven by periodic travelling waves
of muscle contraction. Effective force equations on the midline of the body are
derived. While the force equations have been adapted from the literature
\citep{LanLif75,GuoMad08}, in themselves they yield an underdetermined system
of equations. To close the system, we further consider conservation of mass which leads to a local length
constraint. Our continuum description and closing assumption leads to a
nonlinear initial-boundary value problem, which we solve for an anatomically meaningful
parametrisation. Parameter values for the geometry of the
body, passive and active moments, and the environmental forces are taken from
anatomical and behavioural data from the literature.

Our finite element approach allows us to directly solve the non-linear system
of equations and in this way differs from previous continuous models of
\textit{C.\ elegans} biomechanics~\citep{SznPurKra10,FanWyaXie10}.
This approach ensures the high accuracy and stability of the model and permits
the simulations of arbitrarily large (even non-physical) curvatures.
To this end, we have presented a new numerical method based on a mixed
piecewise-linear piecewise-constant finite element method and semi-implicit
time stepping. We have shown the robustness of the scheme analytically through a semi-discrete stability bound and good
equi-distribution properties for the fully discrete scheme.
Numerical experiments show the accuracy of the method.
Our simulations demonstrate that the numerical methodology is robust with respect to a large variety of body shapes and has improved stability over previous articulated models \citep{Ding2012,BoyBerCoh12} (see in particular \citet[Fig.\ 4C]{BoyBerCoh12} which shows a large, behaviourally significant region of parameter space is unachievable with the authors' approach).
To our knowledge, the only other formulation of a \textit{C.\ elegans} model with the potential to generate complex waveforms 
and high curvature postures relies on particle system simulations 
\citep{Palyanov2016}.
It is our hope that the model presented here offers similar flexibility
and generalisability, but with significantly lower computational overheads.

We contrast our approach with two previous approaches to computational modelling of \textit{C.\ elegans} locomotion. First, \citet{BoyBerCoh12} used a discrete
models in which an articulated body is modelled as a repeating pattern of
springs and dampers.
The continuous nature of the elastic beam is
particularly appealing since it more closely mimics the continuous,
non-segmented body of nematodes and provides a more robust simulation framework.
Second, the OpenWorm consortium \citep{Szigeti2014} has built a cell-scale reconstruction of the worm and uses smoothed particle hydrodynamics (SPH) in order to capture physical forces \citep{Palyanov2016}.
By comparison to the SPH approach, this work assumes simplified fluid dynamics and uniform density; this allows us to reduce the required number of degrees of freedom by four orders of magnitude which in turn yields a more efficient method.

The stability of the model with respect to exogenous (fluid) and endogenous
(body and forcing) parameters and its ability to reliably generate dynamics
with high curvature postures make it suitable for further studies of the
biomechanics of this and other soft bodied systems. In particular, the choice
of parametrisation provides a general framework for following material 
coordinates along a body (as distinct from the conventional arc-length
parametrisations). In \textit{C.\ elegans} and other
side-to-side undulators in which the shape of the body is conserved locally,
our formulation allows us to naturally impose the forces resulting from
internal pressure. In larvae and other peristaltic undulators, distances are
not locally conserved along the body, and the application of this formulation
may therefore offer a powerful numerical framework for modelling similar 
systems.

In \textit{C.\ elegans}, the model is particularly well suited to the study of
a range of biological behaviours (e.g., locomotion, coiling and turning) and
biological variants (e.g.,  disease models, genetic knock-out animals,
developing or ageing animals) in a realistic biomechanical framework and across
a wide range of Newtonian and viscoealstic physical environments that may be
approximately described by drag forces. Potential examples of applications for
\textit{C.\ elegans}, where current simulators would fail include coilers,
loopy worms, omega turns and more.

The model allows us, not only to generate a large range of body shapes, but 
also to probe 
relevant exogeneous and endogenous parameters of the system. 
%Simulations presented here confirm previous estimates of resistive forces 
% on agar ($K_{\nu}/K_{\tau}\approx 40$) \citep{BerTasBoy09}.
%Furthermore, 
In particular, our model allows us to address a long-standing open
question in \textit{C.\ elegans}: estimating the effective elasticity of 
the body. Our model places a estimates on the elasticity of the tissue 
(Young's modulus $E\gtrsim 10\unit{kPa}, E \lesssim 1 \unit{MPa}$), required to overcome realistic 
resistive forces and generate forward thrust.
These results are consistent the estimate of \citet{BacRyuDal13} with prior qualitative experimental observations but higher than the estimate of \citet{SznPurKra10} and lower than the estimates of \citet{ParGooPru07,FanWyaXie10}.

If the Young's modulus is too low, this would restrict the behavioural repertoire
of the worm. In particular, we are able to set a lower bound on the Young's modulus
by requiring that the worm can both bend and generate thrust in agar-like conditions. 
Conversely, we set an upper bound based on behavioural observations that
the worm fails to generate thrust in sufficiently resistive environments. Intuitively, 
the stiffer the cuticle, the higher the power-output of the muscles required to achieve 
the desired bending. Furthermore, for agar like environments, the locomotion of the 
worm is not power limited \citep{BerBoyTas09}. However, resistive force theory suggests
that the worm must considerably increase its muscle power to navigate
through increasingly resistive media so that at sufficiently high viscosity or
viscoelasticity, power limitations dominate. Thus, it is not known whether a higher Young's
modulus may incur a prohibitive material cost, a prohibitive muscle power, or
whether such high viscoelasticity environments are sufficiently rare in the worm's
natural habitat to preclude such a need. Finally, we show that for a given choice
of Young's modulus, waveform modulation allows worms to optimise speed as they
maneuvre through different viscous and viscoelastic environments.

Combined, the generality of the model and combination of analysis and 
numerical simulations that demonstrate the good properties of our method, 
suggest broad applicability for the simulation of locomotory behaviours in 
non-segmented soft-bodied systems and predictions of not only body shapes but also swimming speeds.
Future extensions of the numerical scheme include effectively capturing internal viscosity. This will allow the model to also be used in lower resistance environments and, for example, capture in more detail the transition from crawling to swimming.

This work focuses on the biomechanical aspects of locomotion and stops short of
integrating the dynamics of the nervous system which is responsible for the rhythmic control that activates the muscles, and thus generates the
locomotion. Instead, we use prescribed muscle activation functions similarly to
\cite{SznPurKra10} to actuate the passive body.
Our formulation of the model lays the ground for a robust and
efficient integration of closed-loop control including nervous system, body and
environment. The efficiency and stability displayed by this numerical approach suggest that an appropriate integrated approach, including neuromuscular control and proprioceptive feedback could make predictions about non-undulatory behaviours such as complex navigation and motion through granular or heterogeneous environments.

%
%

% only crawling on agar
% no internal viscosity
% improved stability
% more flexibility wrt body position
% easy modulation of important physical parameters
% integrated neuromuscular control left for future work.
% the efficiency and stability displayed suggests that integrated model 
% will allow extensions to non-undulatory behaviours (complex navigation, 
% omega turns, etc.).

\section*{Acknowledgements}

The research was funded by the Engineering and Physical Sciences Research Council (EPSRC EP/J004057/1).
The authors wish to thank Peter Bollada, Samuel Braustein, Robert Holbrook and Ian Hope for interesting discussions and insights which have helped in the preparation of this manuscript.

% \section*{References}

\bibliographystyle{elsarticle-harv}
\bibliography{library}

\end{document}

%%% Local Variables:
%%% mode: latex
%%% TeX-master: t
%%% End: